\documentclass[aps,reprint,groupedaddress]{revtex4-1}
\usepackage[utf8]{inputenc}
\usepackage[british]{babel}
\usepackage{amsmath}
\usepackage{amsthm}
\usepackage{amsfonts}
\usepackage{amssymb}
\usepackage{mathtools}
\usepackage{tikz}
\usepackage{pgfplots}
\usepackage{hyperref}
\usepackage{hyperref}
\usepackage{amsmath, amssymb, amsfonts}
\usepackage{amssymb,xcolor}
\usepackage{graphicx}
\graphicspath{{images/}}
\usepackage{exscale}
\usepackage{graphicx}
\usepackage{amsmath}
\usepackage{latexsym}
\usepackage{amsfonts}
\usepackage{amssymb}

\def\pmx{\begin{pmatrix}}
\def\emx{\end{pmatrix}}

\newcommand{\ket}[1]{| #1\rangle}
\newcommand{\bra}[1]{\langle #1 |}
\newcommand{\ketbi}[1]{| #1^{AB}\rangle}

\newcommand{\braket}[2]{\langle #1 | #2 \rangle}
\newcommand{\pauli}[1]{\sigma_{#1}}
\newtheorem{theorem}{Theorem}[section]
\newtheorem{proposition}{Proposition}[section]

\theoremstyle{definition}
\newtheorem{define}{Definition}[section]

\begin{document} 

\title{Multipartite entanglement detection for hypergraph states}

\author{M. Ghio$^1$}

\author{D. Malpetti$^2$}

\author{M. Rossi$^3$}

\author{D. Bru{\ss}$^4$}

\author{C. Macchiavello$^5$}

\affiliation{$^1$Scuola Normale Superiore, Piazza dei Cavalieri 7, 56126 Pisa, Italy}
\affiliation{$^2$Laboratoire de Physique, CNRS UMR 5672, Ecole Normale Supérieure de Lyon, Universite de Lyon, 46 Allée d’Italie, Lyon, F-69364, France}
\affiliation{$^3$Dipartimento di Fisica and INFN-Sezione di Pavia, Via Bassi 6, 27100 Pavia, Italy}
\affiliation{$^4$Institut f{\"u}r Theoretische Physik III, Heinrich-Heine-Universit{\"a}t D{\"u}sseldorf, D-40225 D{\"u}sseldorf, Germany}
\affiliation{$^5$Dipartimento di Fisica and INFN-Sezione di Pavia, Via Bassi 6, 27100 Pavia, Italy}

\begin{abstract}

We study the entanglement properties of \textit{quantum hypergraph states} of $n$ qubits, 
focusing on \textit{multipartite entanglement}. We compute multipartite 
entanglement for hypergraph states with a single 
hyperedge of maximum cardinality, for hypergraph states endowed with all 
possible hyperedges of cardinality equal to $n-1$ and for those hypergraph 
states with all possible hyperedges of 
cardinality greater than or equal to $n-1$. We then find a lower bound to the 
multipartite entanglement of a generic quantum hypergraph state.
We finally apply the multipartite entanglement results to the construction 
of \textit{entanglement witness operators}, able to detect genuine 
multipartite entanglement in the neighbourhood of a given hypergraph state. 
We first build entanglement witnesses of the projective type, then propose a 
class of witnesses based on the stabilizer formalism, hence called 
\textit{stabilizer witnesses}, able to reduce the experimental effort from an 
exponential to a linear growth in the number of local measurement settings 
with the number of qubits.

\end{abstract}

\maketitle

\section{Introduction}

Quantum hypergraph states were recently introduced \cite{qu2013,rossi2013}  
 in order to study
a family of multi-qubit entangled states that generalise the notion of graph 
states \cite{graph}, central in various aspects of quantum information, 
such as 
measurement-based quantum computation and quantum error correction. 
This family of states can also be described as locally maximally entangleable 
(LME) states for the particular value $\pi$ of the phase \cite{barbara}. 
Quantum hypergraph states were shown to play a central role in many well known
quantum algorithms \cite{ent-algo,rossi2014} and to provide extreme violation 
of local realism \cite{ottfried}, leading to applications in quantum metrology
and measurement-based quantum computation.

On the other hand, multipartite entanglement is a precious resource in various
quantum information processing tasks, such as for example secret sharing \cite{secret},
multipartite
quantum key distribution \cite{multi-qkd}, distributed dense coding  \cite{dense},
 and some quantum algorithms 
\cite{jl,ent-algo}.
Studying multipartite entanglement properties of quantum states is
therefore of fundamental interest.

In this work we study the multipartite entanglement properties of hypergraph 
states and the possibility of detecting multipartite entanglement via witness 
operators. The paper is organised as follows. In Sect. 
\ref{section_preliminaries} we recall some  notions about 
hypergraph states and multipartite entanglement, that will then be used in the 
rest of the paper. In Sect. \ref{section_multipartite_entanglement} we present
an analytical procedure to derive the multipartite entanglement content for
hypergraph states with a single 
hyperedge of maximum cardinality, for those with all 
possible hyperedges of cardinality equal to $n-1$ and for those 
with all possible hyperedges of 
cardinality greater than or equal to $n-1$. We also derive a lower bound to 
the multipartite entanglement of a generic hypergraph state.
In Sect. \ref{section_entanglement_witnesses} we construct entanglement 
witness operators of two types and analyse their efficiency in terms of number 
of local measurement settings required. We end the paper with a summary
of the results and some concluding remarks in Sect.\ref{Concl}.

\section{Preliminaries}
\label{section_preliminaries}

In this section we define  \textit{quantum graph} and \textit{hypergraph 
states}. We then recall some fundamentals of the theory of \textit{quantum 
entanglement}, introducing \textit{multipartite entanglement}, the 
entanglement measure we will make use of in the rest of our work, and 
focusing on entanglement detection via \textit{entanglement witness operators}.

\subsection{Quantum hypergraph states}

We define \textit{quantum hypergraph states} following the approach of \cite{rossi2013}. For a complete review on graph states we refer to \cite{hein2006}. 

\begin{define}[Hypergraph state - Operational definition]
\label{def_hypergraph_state}
	Let $H=(V,E)$ be a hypergraph of order $n$. To each vertex $i$ we 
associate a qubit $q_{i}$ for $i=1,2,...\,n$, thus associating an $n$-qubit 
quantum system $Q=\lbrace q_{i} \rbrace_{i=1}^{n}$ to the $n$-order 
hypergraph $H$.\\
	We then define the \textit{hypergraph state} $\ket{H}$ associated to 
hypergraph $H$ as the following $n$-qubit pure quantum state
	
\begin{eqnarray}
		\ket{H}:= \prod_{k=1}^{n} \prod_{e\in E, \lvert e \rvert =k} 
C_{k}^{e}\ket{+}^{\otimes n}
\label{def-H}
\end{eqnarray}

where $C_{k}^{e}$ is the $k$-qubit controlled-Z gate acting on the $k$ 
qubits connected by the $k$-hyperedge $e$ and $\ket{+}=\frac{\ket{0}+\ket{1}}
{\sqrt{2}}$ is a superposition of the computational basis states. 
The action of the control gate $C_{k}^{e}$ is defined as

\begin{eqnarray}
C_{k}^{(i_{1},i_{2},... \,i_{k})}=\mathbb{I}^{(j)} \otimes 
(\mathbb{I}-\mathbb{P}) + \pauli{z}^{(j)} \otimes \mathbb{P}
\label{def-Ck}
\end{eqnarray}

for all 
$j=i_{1},i_{2},... \,i_{k}$, where $\mathbb{P}$ is the projector onto the state 
$\ket{11...\,1}^{(i_{1},i_{2},...\,\hat{j},.. \,i_{k})}$ and 
 the notation $\hat{i}$ means that index $i$ is not included. 
Here,  $\sigma_z^{(j)}$ is the Pauli-$z$-operator of vertex $j$.
Hypergraph states with all hyperedges of the same cardinality $k$ are called 
\textit{$k$-uniform}. \textit{Graph states} are a particular case of 
$k$-uniform hypergraph states with $k=2$. 
\end{define}

Hypergraph states, just like graph states, allow for an equivalent definition 
based on a \textit{generalized stabilizer formalism}. However, differently 
from the graphs' \textit{stabilizers} \cite{hein2006}, 
we point out that the \textit{generalized stabilizers} are no more local 
observables.
\begin{define}[Hypergraph state - Stabilizer definition]
\label{def_stabilizer_hypergraph}
	We define the \textit{hypergraph state} $\ket{H}$ associated to the
hypergraph $H$ with $n$ vertices as the unique eigenvector with eigenvalue $1$ of the set of 
$n$ operators $\lbrace K_{i} \rbrace_{i=1}^{n}$ defined as

\begin{eqnarray}
K_{i}:=\pauli{x}^{(i)} \otimes\prod_{k=1}^{n}\prod_{e\in N(i),\lvert e 
\rvert=k-1} C_{k-1}^{e}
\label{def-Ki}
\end{eqnarray}
where  $\sigma_x^{(i)}$ is the Pauli-$x$-operator of vertex $i$
and $N(i)$ denotes the neighbourhood of vertex $i$.

The operators $\lbrace K_{i} \rbrace_{i=1}^{n}$ are called \textit{generalized 
stabilizer operators} of hypergraph state $\ket{H}$; they are hermitian 
operators generating an Abelian group $\Sigma_{n}$ of $2^{n}$ elements 
\cite{rossi2013}. The stabilizers and their 
compositions are hermitian operators.
\end{define}

Given any $n$-qubit hypergraph state $\ket{H}$, we introduce the 
\textit{hypergraph state basis}, generalizing the \textit{graph state basis} 
\cite{hein2006}, with respect to which the stabilizer operators are 
simultaneously diagonalizable.
\begin{proposition}[Hypergraph state basis]
\label{proposition_hypergraph_basis}
	Let $\ket{H}$ be an $n$-qubit hypergraph state and $\lbrace K_{i} 
\rbrace_{i=1}^{n}$ the set of its stabilizer operators. Then the following 
set of $2^{n}$ states

\begin{eqnarray}
	\mathcal{B}_{n}:=\lbrace \ket{\phi_{s}}:=\pauli{z}^{s}\ket{H}\equiv
\pauli{z}^{s_{1}}\otimes \pauli{z}^{s_{2}}\otimes ...\,\pauli{z}^{s_{n}}
\ket{H} \rbrace_{s=0}^{2^{n}-1}
\label{def-Hyperbasis}
\end{eqnarray}
where $s$ is  a binary number composed of bits $s_1,s_2,...s_n$,	
forms a basis for the $n$-qubit Hilbert space $\mathcal{H}_{n}\simeq 
\mathbb{C}^{n}$. Moreover, stabilizer operators $\lbrace K_{i} 
\rbrace_{i=1}^{n}$ are simultaneously diagonalizable with respect to this 
basis

\begin{eqnarray}
		K_{i}\ket{\phi_{s}}=(-1)^{s_{i}} \ket{\phi_{s}}
\label{Ki-rep}
\end{eqnarray}
	
and $\braket{\phi_{s}}{\phi_{t}}=\delta_{s,t}$.
\end{proposition}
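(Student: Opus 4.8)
The plan is to derive everything from just two ingredients: the defining property $K_i\ket{H}=\ket{H}$ from Definition~\ref{def_stabilizer_hypergraph}, and the structural observation that, apart from its Pauli-$x$ factor, each $K_i$ is diagonal in the computational basis. Indeed, from \eqref{def-Ck} every controlled-$Z$-type gate $C_{k}^{e}$ is built out of the computational-basis-diagonal operators $\mathbb{I}$, $\mathbb{P}$ and $\pauli{z}$, hence is itself diagonal in that basis; therefore the operator $D_i:=\prod_{k}\prod_{e\in N(i),\,|e|=k-1}C_{k-1}^{e}$ multiplying $\pauli{x}^{(i)}$ in \eqref{def-Ki} is diagonal, and in particular commutes with every $\pauli{z}^{(j)}$. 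Writing $\pauli{z}^{s}:=\pauli{z}^{s_1}\otimes\cdots\otimes\pauli{z}^{s_n}$ and using the Pauli relations $\pauli{z}^{(j)}\pauli{x}^{(i)}=(-1)^{\delta_{ij}}\pauli{x}^{(i)}\pauli{z}^{(j)}$ together with $(\pauli{z}^{(j)})^2=\mathbb{I}$, I would first establish the conjugation rule
\begin{equation}
\pauli{z}^{s}\,K_i\,\pauli{z}^{s}=(-1)^{s_i}\,K_i\qquad\text{for every }i\text{ and every }s .
\end{equation}
Equivalently one may note $K_i=U_H\,\pauli{x}^{(i)}\,U_H^{\dagger}$ with $U_H$ the diagonal circuit of \eqref{def-H}; since $U_H$ commutes with all $\pauli{z}^{(j)}$, the rule is then immediate.

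Next I would read off \eqref{Ki-rep}. Since $\pauli{z}^{s}$ is an involution, $K_i\ket{\phi_s}=K_i\pauli{z}^{s}\ket{H}=\pauli{z}^{s}\bigl(\pauli{z}^{s}K_i\pauli{z}^{s}\bigr)\ket{H}=(-1)^{s_i}\pauli{z}^{s}K_i\ket{H}=(-1)^{s_i}\pauli{z}^{s}\ket{H}=(-1)^{s_i}\ket{\phi_s}$, where the last two equalities use $K_i\ket{H}=\ket{H}$. This already shows each $\ket{\phi_s}$ is a simultaneous eigenvector of all the $K_i$ with sign pattern $s$.

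For orthonormality I would use $\pauli{z}^{s}\pauli{z}^{t}=\pauli{z}^{s\oplus t}$, so that $\braket{\phi_s}{\phi_t}=\bra{H}\pauli{z}^{s\oplus t}\ket{H}$. If $s=t$ this is $\braket{H}{H}=1$, since $\ket{H}$ is a unitary image of $\ket{+}^{\otimes n}$ and hence normalised. If $s\neq t$, choose an index $i$ with $(s\oplus t)_i=1$; by the conjugation rule $\pauli{z}^{s\oplus t}$ then anticommutes with $K_i$, and since $K_i$ is Hermitian with $K_i\ket{H}=\ket{H}$ (so $\bra{H}K_i=\bra{H}$),
\begin{equation}
\braket{\phi_s}{\phi_t}=\bra{H}K_i\,\pauli{z}^{s\oplus t}\ket{H}=-\bra{H}\pauli{z}^{s\oplus t}K_i\ket{H}=-\braket{\phi_s}{\phi_t},
\end{equation}
forcing $\braket{\phi_s}{\phi_t}=0$. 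Thus $\mathcal{B}_n$ consists of $2^n$ orthonormal vectors in the $2^n$-dimensional space $\mathcal{H}_n$; orthonormality gives linear independence, so $\mathcal{B}_n$ is an orthonormal basis, and \eqref{Ki-rep} exhibits the simultaneous diagonalisation.

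The only step that is more than routine bookkeeping is the structural claim that the non-$\pauli{x}$ part of $K_i$ is diagonal in the computational basis; once that is secured (either directly from \eqref{def-Ck} or through the $U_H$ picture), the conjugation rule, the eigenvalue equation \eqref{Ki-rep}, and the orthonormality all follow mechanically from the Pauli algebra and the single input $K_i\ket{H}=\ket{H}$. A minor point worth stating explicitly is the normalisation $\braket{H}{H}=1$, needed for the $s=t$ case.
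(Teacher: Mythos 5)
Your proof is correct and takes essentially the same route as the paper's: the key step in both is that $K_i$ anticommutes with $\sigma_z^{(i)}$ and commutes with $\sigma_z^{(j)}$ for $j\neq i$ (because the non-$\sigma_x$ factor of $K_i$ is diagonal in the computational basis), which together with $K_i\ket{H}=\ket{H}$ gives \eqref{Ki-rep}. Your orthogonality argument of inserting $K_i$ and flipping the sign is a computational rephrasing of the paper's observation that $\ket{\phi_{s}}$ and $\ket{\phi_{t}}$ lie in different eigenspaces of the Hermitian operator $K_i$, and your uniform treatment of a generic stabilizer via the diagonal factor (or the $U_H$ conjugation picture) is exactly what the paper gestures at when it says the same reasoning extends from the single-hyperedge case to a generic stabilizer.
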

\begin{proof}
	We begin by proving the statement in the particular case of an 
$n$-qubit hypergraph state $\ket{H}$ with only one $n$-hyperedge; this allows 
for a very simple representation of its stabilizer operators $K_{i}$ for 
$i=1,2,...\,n$:
	
\begin{eqnarray}
K_{i}=\pauli{x}^{(i)}\otimes C_{n-1}^{(1,2,...\hat{i},...\,n)} \,.
\label{Ki-simrep}
\end{eqnarray}

We first prove that stabilizer operators $K_{i}$ of this form 
commute with the Pauli matrices $\pauli{z}^{(j)}$ whenever $j\neq i$, 
whereas they anticommute when $i=j$. 
	In order to evaluate the action of the stabilizer operator $K_{i}$ on 
qubit $j$ when $i\neq j$, we make use of expression (\ref{def-Ck})
and get

\begin{eqnarray}
K_{i}=\pauli{x}^{(i)} \otimes \mathbb{I}^{(j)} \otimes 
(\mathbb{I}-\mathbb{P}) + \pauli{x}^{(i)}\otimes \pauli{z}^{(j)} 
\otimes \mathbb{P} \,.
\label{Ki-rep2}
\end{eqnarray}
	
Since $\pauli{z}^{(j)}$ commutes both with $\mathbb{I}^{(j)}$ and with 
itself, the commutativity is immediately verified.
	Instead when the two indexes coincide, a negative sign, due to the 
anti-commutativity of the Pauli matrices, appears, namely

\begin{eqnarray}
	\begin{split}
		K_{i}\pauli{z}^{(i)}&=(\pauli{x}^{(i)}
\otimes C_{n-1}^{(1,2,...\hat{i},...\,n)})\pauli{z}^{(i)}\\
		&=-\pauli{z}^{(i)} (\pauli{x}^{(i)}\otimes 
C_{n-1}^{(1,2,...\hat{i},...\,n)})\\
		&=-\pauli{z}^{(i)}  K_{i} \,.
	\end{split}
\label{anti}
\end{eqnarray}
	
It follows that
\begin{eqnarray}
	\begin{split}
		K_{i}\ket{\phi_{s}}&=K_{i}\pauli{z}^{s}\ket{H}\\
		&=(-1)^{s_{i}} \pauli{z}^{s}\ket{H}\\
		&=(-1)^{s_{i}} \ket{\phi_{s}}\,.
	\end{split}
\label{kphi}
\end{eqnarray}

This same reasoning applies to a generic 
stabilizer: it suffices to recall that any stabilizer operator $K_{i}$ may be 
written as the composition of $\pauli{x}^{(i)}\otimes 
\mathbb{I}_{n-1}^{(1,2,... \, \hat{i},...\, n)}$ with $k$-controlled gates of 
the form $\mathbb{I}_{1}^{(i)}\otimes C_{k}^{(i_{1},i_{2},...\, i_{k})}
\otimes \mathbb{I}_{n-k-1}$.\\

	We finally check the orthonormality relation. 
Let $0\leq s,t \leq 
2^{n}-1$ be two different binary numbers $s\neq t$, then there exists at 
least one index $i$ such that $s_{i}\neq t_{i}$, say $s_{i}=1$ and $t_{i}=0$. 
Then $K_{i}\ket{\phi_{s}}=(-1)^{s_{i}} \ket{\phi_{s}}=-\ket{\phi_{s}}$, 
while 
$K_{i}\ket{\phi_{t}}=(-1)^{t_{i}} \ket{\phi_{t}}=\ket{\phi_{t}}$, 
which means 
that $\ket{\phi_{s}}$ and $\ket{\phi_{t}}$ belong to two different, thus 
orthogonal, eigenspaces.
\end{proof}

Just like for the projector on a graph state \cite{hein2006}, as a consequence
of Proposition \ref{proposition_hypergraph_basis}, it can be proved that, 
given a hypergraph state $\ket{H}$, the projector $\ket{H}\bra{H}$ may be 
represented both in terms of the stabilizers $\lbrace K_{i} \rbrace_{i=1}^{n}$
and of the elements of the stabilizer group $\Sigma_{n}$ \cite{guhne2014} as

\begin{equation}
\label{hypegraph_projector}
	\ket{H}\bra{H}=\frac{1}{2^{n}}\sum_{\sigma \in \Sigma_{n}}\sigma = 
\prod_{i=1}^{n}\frac{\mathbb{I}+K_{i}}{2}\,.
\end{equation}

\subsection{Entanglement measures and entanglement detection}

In this work we are interested in \textit{completely} or 
\textit{fully entangled} quantum states of multipartite quantum systems, 
hence in \textit{genuine multipartite entanglement}. 
We remind the reader that the mixed state of a multipartite 
quantum system is said to be completely or fully entangled if it cannot be 
written as a 
convex combination of projectors onto  states that are biseparable
with respect to any bipartition, 
even allowing for different bipartitions in the same decomposition. For a 
complete review of the theory of entanglement and the problem of entanglement 
detection we mainly refer to \cite{guhne2009} and 
\cite{horodecki2009}.
Here, we first study the 
entanglement properties of quantum hypergraph states, and then apply the 
entanglement results to the construction of entanglement witness operators 
for the detection of genuine multipartite entanglement in the neighbourhood 
of a given hypergraph state. 

\begin{define}[Bipartite entanglement - Multipartite entanglement]
	\label{def_bipartite_entanglement}
	Let $\ket{\psi_{n}}\in \mathcal{H}_{n}$ be the pure state of a 
composite quantum system composed of $n$ subsystems $\lbrace 1,2,...\,n 
\rbrace$. Let $AB$ be a possible bipartition of the $n$ subsystems with 
$A=\lbrace 1,2,... \,k \rbrace$ and $B=\lbrace k+1,...\,n\rbrace$ for some 
$1\leq k< n$. We define the \textit{bipartite entanglement} of the state
with respect to bipartition $AB$ as

\begin{eqnarray}
		\begin{split}
			E^{AB}(\ket{\psi_{n}})&:=1-\max_{\ket{\phi^{A}}
\ket{\phi^{B}}}\lvert \bra{\phi^{A}}\braket{\phi^{B}}{\psi_{n}} 
\rvert^{2}\\
			&\equiv 1-\alpha^{AB}(\ket{\psi_n})
		\end{split}
\label{def-bipent}
\end{eqnarray}

where the maximum is taken over all pure biseparable states 
$\ket{\phi^{A}_{k}}\ket{\phi^{B}_{n-k}}$.\\
We define the state's \textit{multipartite entanglement} as its minimum 
bipartite entanglement $E^{AB}(\ket{\phi_{n}})$ with respect to all possible 
bipartitions $AB$:

\begin{eqnarray}
	\begin{split}
			E(\ket{\psi_{n}})&:=\min_{AB}E^{AB}
(\ket{\psi_{n}})\\
			&= 1-\max_{\ket{\phi^{A}}\ket{\phi^{B}},AB}\lvert \bra{\phi^{A}}\braket{\phi^{B}}{\psi_{n}}\rvert^{2}\\
			&\equiv 1-\alpha(\ket{\psi_n})
		\end{split}
\label{def-multient}
\end{eqnarray}

where the maximum is taken over all pure biseparable states 
$\ketbi{\phi}=\ket{\phi^{A}_{k}}\ket{\phi^{B}_{n-k}}$ as well as over all 
possible bipartitions $AB$.
\end{define}
As required by a good measure of entanglement, it can be checked that both bipartite and multipartite entanglement are two \textit{non-increasing} entanglement measures under LOCCs \cite{nielsen1999}.\\
Moreover, in order to compute the overlap between a quantum state $\ket{\psi_{n}}$ and 
the set of all pure biseparable states with respect to a bipartition $AB$, it 
is not necessary to explicitly perform the maximization over the whole set: 
it can be proved \cite{bourennane2004} that

\begin{eqnarray}
\alpha^{AB}(\ket{\psi_n})=\max_{k=1,...\,R}s_{k}^{AB}(\ket{\psi_{n}})^{2}
\label{alpha}
\end{eqnarray}

where $\lbrace s_{k}^{AB}(\ket{\psi_{n}}) \rbrace_{k=1}^{R}$ is the set of 
the Schmidt coefficients of state $\ket{\psi_{n}}$ with respect to 
bipartition $AB$ and $R$ is its Schmidt rank.

\begin{define}[Entanglement witness for genuine multipartite entanglement - 
\cite{horodecki1996}]
\label{def_witness}
Let $\rho_{ent}$ be the density matrix representing a completely entangled 
state of a multipartite quantum system $Q$; let $S_{bi}(Q)$ be the convex 
set of all states that may be written as a convex combination of biseparable 
states. Let $W$ be a hermitian operator such that

\begin{eqnarray}
		\begin{cases}
			\text{Tr}[W\rho_{ent}]<0\\
			\text{Tr}[W\rho_{sep}]\geq 0 \,\,\,\forall \,\,\, 
\rho_{sep}\in S_{bi}(Q) \,.
		\end{cases}
\label{def-W}
\end{eqnarray}
Then operator $W$ is an \textit{entanglement witness} for genuine 
multipartite entanglement.
\end{define}
A standard procedure for the construction of an entanglement witness that is able to 
detect genuine multipartite entanglement in the neighbourhood of a given state
$\ket{H}$ is that of the \textit{projector-based entanglement witness}
\cite{guhne2009}
\begin{eqnarray}
	W:=\alpha(\ket{H}) \mathbb{I}-\ket{H}\bra{H}
\label{proj-W}
\end{eqnarray}
where 
$\alpha(\ket{H})$  is defined in Eq. (\ref{def-multient}).

\section{Multipartite entanglement in quantum hypergraph states}
\label{section_multipartite_entanglement}

In this section we first compute the exact multipartite entanglement formulas 
for some specific classes of hypergraph states and we then derive a 
lower bound to the multipartite entanglement of a generic hypergraph state.

We propose a procedure to evaluate exactly the multipartite entanglement for 
some symmetric classes of hypergraph states. In order to do this we use
the concept of infinity norm, defined as follows.

\begin{define}[Infinity norm]
\label{def_infinity_norm}
Let $M\in \mathbb{C}^{n\times n}$ be a square matrix. 
Its \textit{infinity norm} $\parallel\hspace{-0.1cm} M\hspace{-0.1cm} 
\parallel_{\infty}$ is defined as
\begin{eqnarray}
	\parallel\hspace{-0.1cm} M\hspace{-0.1cm} \parallel_{\infty}:=
\max_{i=1,2,...\,n}\sum_{j=1}^{n}\lvert M_{ij} \rvert\,.
\label{def-inftynorm}
\end{eqnarray}

For $M\geq 0$, its maximum eigenvalue is bounded by $\lambda_{max}(M)\leq 
\parallel\hspace{-0.1cm} M\hspace{-0.1cm} \parallel_{\infty}$ (see for 
instance \cite{horn2012}).
\end{define}

Our procedure which allows to compute $\alpha^{AB}(\ket{\phi})$ in Eq. (\ref{def-multient}) 
is summarized as follows:

\begin{itemize}

\item Take an $n$-qubit hypergraph state $\ket{H_{n}}$ which is invariant 
under permutations of the qubits;

\item Consider the bipartition $\bar{A}=\lbrace 1,2,...\,(n-1) \rbrace$ and 
$\bar{B}=\lbrace n \rbrace$, perform the Schmidt decomposition with respect 
to this bipartition and spot the 
maximum Schmidt coefficient $s_{max}^{\bar{A}\bar{B}}(\ket{H_{n}})$;

\item Consider now all other bipartitions $A=\lbrace 1,2,...\,n-k \rbrace$ 
and $B=\lbrace n-k+1,...\, n \rbrace$ for $k>1$ and write the reduced density 
matrix $\rho^{(12...n-k)}$ corresponding to $n-k$ qubits;

\item Compute the infinity norm  $\parallel \hspace{-1mm}\rho^{(12...n-k)} 
\hspace{-1mm}\parallel_{\infty}$;

\item Compare the infinity norm $\parallel \hspace{-1mm} \rho^{(12...n-k)} 
\hspace{-1mm}\parallel_{\infty}$ with $\left(s_{max}^{\bar{A}\bar{B}}
(\ket{H_{n}})\right)^{2}$;

\item If $\parallel \hspace{-1mm}\rho^{(12...n-k)} 
\hspace{-1mm}\parallel_{\infty}\leq \left(s_{max}^{\bar{A}\bar{B}}
(\ket{H_{n}})\right)^{2}$ for all values $1<k\leq n/2$ then 
$\alpha(\ket{H_{n}})=\left(s_{max}^{\bar{A}\bar{B}}(\ket{H_{n}})\right)^{2}$.
This last step is justified by (\ref{alpha}).

\end{itemize}

In the following we apply this procedure to compute $\alpha(\ket{H_{n}})$, 
and therefore the multipartite entanglement $E(\ket{\phi_{n}})$ via 
(\ref{def-multient}), for some classes of hypergraph states.

\subsection{Hypergraph states with one maximum-cardinality hyperedge}

We here consider $n$-qubit hypergraph states $\ket{G_{n}}$ with only one 
maximum-cardinality $n$-hyperedge, namely
\begin{equation}
	\ket{G_{n}}=C_{n}^{(1,2,...\,n)}\ket{+}^{\otimes n}\,.
\label{def-Gn}
\end{equation}
\begin{theorem}[Multipartite entanglement - One maximum-cardinality hyperedge]
\label{theorem_multipartite_entanglement}
	Let $\ket{G_{n}}$ be an $n$-qubit hypergraph state with just one 
maximum-cardinality $n$-hyperedge. Then the maximum squared overlap between 
hypergraph state $\ket{G_{n}}$ and the pure biseparable states is 

\begin{equation}
	\alpha_{n} = \max_{\ket{\phi^{A}}\ket{\phi^{B}},\lbrace A,B\rbrace} 
|\bra{\phi^{A}}\braket{\phi^{B}}{G_{n}}|^{2}=\frac{2^{n-1}-1}{2^{n-1}}
\label{alpha-Gn}
\end{equation}
and the multipartite entanglement of hypergraph state $\ket{G_{n}}$ is
\begin{eqnarray}
		E(\ket{G_{n}})=\frac{1}{2^{n-1}} \,.
\label{E-Gn}
\end{eqnarray}
\end{theorem}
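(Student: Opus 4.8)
The plan is to run the permutation-symmetry procedure set out above. Since $\ket{G_n}$ carries a single hyperedge covering all $n$ vertices it is invariant under every permutation of the qubits, so up to relabelling every bipartition has the form $A=\{1,\dots,n-k\}$, $B=\{n-k+1,\dots,n\}$, and because the squared overlap in (\ref{def-multient}) is symmetric under $A\leftrightarrow B$ it suffices to treat $1\le k\le n/2$. By (\ref{alpha}) one has $\alpha^{AB}(\ket{G_n})=\lambda_{\max}(\rho^{A})$ for each bipartition, so the theorem reduces to showing that $\lambda_{\max}(\rho^{(1\dots n-k)})$ is maximised at $k=1$, where it equals $\tfrac{2^{n-1}-1}{2^{n-1}}$; the formula for $E(\ket{G_n})$ then follows immediately from (\ref{def-multient}).

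First I would settle $k=1$. Viewing $C_n^{(1,\dots,n)}$ as the gate that reverses the sign of $\ket{1}^{\otimes n}$ gives the splitting $\ket{G_n}=\tfrac1{\sqrt2}\bigl(\ket{+}^{\otimes(n-1)}\ket{0}^{(n)}+\ket{G_{n-1}}\ket{1}^{(n)}\bigr)$, with $\ket{G_{n-1}}$ the analogous single-hyperedge state on the first $n-1$ qubits. Tracing out the first $n-1$ qubits leaves a $2\times2$ reduced state with both diagonal entries $\tfrac12$ and both off-diagonal entries $\tfrac{c}{2}$, where $c=\braket{+^{\otimes(n-1)}}{G_{n-1}}=1-2^{-(n-2)}$; its eigenvalues are $\tfrac{1\pm c}{2}$, the larger being $\tfrac{1+c}{2}=\tfrac{2^{n-1}-1}{2^{n-1}}$. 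This already produces the claimed value of $\alpha_n$, and it remains to check that no bipartition with $k\ge2$ beats it.

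For $2\le k\le n/2$ I would compute $\rho^{(1\dots n-k)}=\Tr_B\ket{G_n}\bra{G_n}$ directly. Writing $\ket{G_n}=2^{-n/2}\sum_{a,b}(-1)^{\varepsilon(a)\varepsilon(b)}\ket{a}\ket{b}$, where $\varepsilon(x)=1$ if $x$ is the all-ones string and $0$ otherwise, and separating in the $b$-trace the single term $b=1\cdots1$ from the remaining $2^k-1$ terms, exhibits $\rho^{(1\dots n-k)}$ as an operator of rank at most two,
\begin{equation}
\rho^{(1\dots n-k)}=\frac{2^{k}-1}{2^{n}}\,\ket{u}\bra{u}+\frac{1}{2^{n}}\,\ket{v}\bra{v},
\end{equation}
where $\ket{u}=\sum_{a}\ket{a}$ and $\ket{v}=\sum_{a}(-1)^{\varepsilon(a)}\ket{a}$ run over the $2^{n-k}$ computational basis states of $A$. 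From this closed form every matrix element equals $\bigl((2^{k}-1)\pm1\bigr)/2^{n}\ge0$, so the row sums of $|\rho^{(1\dots n-k)}|$ are easily evaluated: the $2^{n-k}-1$ rows indexed by $a\ne1\cdots1$ each sum to $\tfrac{2^{n}-2}{2^{n}}$, while the single row $a=1\cdots1$ sums to $1-\tfrac{2^{n-k+1}-2}{2^{n}}$, which is strictly smaller since $n-k\ge k\ge2$. Hence $\|\rho^{(1\dots n-k)}\|_\infty=\tfrac{2^{n}-2}{2^{n}}=\tfrac{2^{n-1}-1}{2^{n-1}}$, and by Definition \ref{def_infinity_norm} this bounds $\lambda_{\max}(\rho^{(1\dots n-k)})$. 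Comparing with the $k=1$ value and invoking (\ref{alpha}) yields $\alpha_n=\tfrac{2^{n-1}-1}{2^{n-1}}$, whence $E(\ket{G_n})=1-\alpha_n=\tfrac{1}{2^{n-1}}$.

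I expect the bookkeeping in the last paragraph to be the main obstacle: performing the partial trace for arbitrary $k$, recognising the rank-two structure (this is precisely what makes the infinity-norm estimate tight enough to meet the $k=1$ value), and verifying the sign so the absolute values in the row sums can be dropped. As a cross-check one can instead diagonalise the $2\times2$ Gram matrix of $\{\sqrt{(2^{k}-1)/2^{n}}\,\ket{u},\ \sqrt{1/2^{n}}\,\ket{v}\}$, whose eigenvalues are $\tfrac12\bigl(1\pm\sqrt{1-16(2^{k}-1)(2^{n-k}-1)/2^{2n}}\bigr)$; the bound $\mu_+\le1-2/2^{n}$ then reduces, after squaring, to $(2^{k}-2)(2^{n-k}-2)\ge0$, which holds for all $k\ge1$ with equality exactly at $k=1$ (equivalently $k=n-1$). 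This also shows that the minimising bipartition is the $(1\,|\,n-1)$ one and is unique.
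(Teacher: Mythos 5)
Your proof is correct and follows essentially the same route as the paper's: the value $\tfrac{2^{n-1}-1}{2^{n-1}}$ from the $(n-1\,|\,1)$ split, and for $2\le k\le n/2$ the same reduced density matrix (your entries $\tfrac{2^k}{2^n}$, $\tfrac{2^k-2}{2^n}$ coincide with the paper's $\tfrac{2^{k-1}}{2^{n-1}}$, $\tfrac{2^{k-1}-1}{2^{n-1}}$) bounded via the infinity norm exactly as in Eq.~(\ref{l-Gn}). Your rank-two Gram-matrix cross-check is a small bonus the paper does not include, since it gives the eigenvalues of $\rho^{(1\dots n-k)}$ exactly and shows the $(n-1\,|\,1)$ bipartition is the unique minimiser.
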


Hypergraph states with only one maximum-cardinality 
hyperedge are superpositions 
of all the elements of the computational basis with only one negative sign in 
front of the element $\ket{11...\,1}$. These are exactly the same states 
employed by Grover's quantum search algorithm in the single solution case 
\cite{grover}. This result was in fact first proved in Ref. 
\cite{rossi_bruss2013}, where the entanglement dynamics in Grover's 
algorithm is analysed. Here we prove it by following the procedure outlined
above.

\begin{proof}
Consider first the bipartition $\bar{A}=\lbrace 1,2,...\,(n-1) \rbrace$ and 
$\bar{B}=\lbrace n \rbrace$. The Schmidt decomposition of hypergraph state 
$\ket{G_{n}}$ with respect to bipartition $\bar{A}\bar{B}$ is

\begin{eqnarray}
\begin{split}
	\ket{G_{n}}=&\sqrt{\frac{2^{n-1}-1}{2^{n-1}}}\,\,
\sum_{x=0}^{2^{n-1}-2} \frac{\ket{x}}{\sqrt{2^{n-1}-1}}\,\ket{+}\,\,+\\
	&+\frac{\ket{11...1}\ket{-}}{\sqrt{2^{n-1}}}
\end{split}
\label{Gn-bip}
\end{eqnarray}
and the maximum Schmidt coefficient is therefore 
$s_{max}^{\bar{A}\bar{B}}(\ket{G_{n}})=\sqrt{\frac{2^{n-1}-1}{2^{n-1}}}$.\\
Consider now bipartitions $A=\lbrace 1,2,...\,n-k \rbrace$ and $B=\lbrace 
n-k+1,...\, n \rbrace$ with $k>1$. By performing the partial trace over 
the last $k$ subsystems, the reduced density matrix $\rho^{(12...n-k)}$ becomes
\begin{eqnarray}
\footnotesize
	\hspace{-1mm}\frac{1}{2^{n-1}}\hspace{-1mm}
	\left( \begin{matrix}
	2^{k-1} & 2^{k-1} & \dots & 2^{k-1} & 2^{k-1}-1\\
	2^{k-1} & 2^{k-1} & \dots & 2^{k-1} & 2^{k-1}-1\\
	\vdots & \vdots  & \ddots & \vdots  & \vdots \\
	2^{k-1} & 2^{k-1} & \dots & 2^{k-1} & 2^{k-1}-1\\
	2^{k-1}-1 & 2^{k-1}-1 & \dots & 2^{k-1}-1 & 2^{k-1}
	\end{matrix} \right)
\label{rho-Gn}
\end{eqnarray}
Regarding  the maximum eigenvalue, it follows that
\begin{eqnarray}
	\lambda_{max}(\rho^{(12...n-k)})\leq \parallel \hspace{-1mm}
\rho^{(12...n-k)}\hspace{-1mm} \parallel_{\infty}=\frac{2^{n-1}-1}{2^{n-1}}
\label{l-Gn}
\end{eqnarray}
for all $k>1$. 
We then conclude  that $\alpha_{n}=\frac{2^{n-1}-1}{2^{n-1}}$.
\end{proof}

\subsection{Hypergraph states with all ($n$-$1$)-hyperedges}

We now consider $n$-qubit hypergraph states $\ket{H_{n}^{n-1}}$ endowed with 
all possible hyperedges of cardinality $n$-$1$, namely
\begin{equation}
	\ket{H_{n}^{n-1}}=\prod_{i=1}^{n} C_{n-1}^{(1,2,...\,
\hat{i},...\,n)}\ket{+}^{\otimes n}\,.
\label{Hn-1}
\end{equation}

\begin{theorem}[Multipartite entanglement - Hyperedges of cardinality $n$-$1$]
\label{theorem_multipartite_entanglement_n-1}
	Let $\ket{H_{n}^{n-1}}$ be an $n$-qubit hypergraph state endowed with 
all possible hyperedges of cardinality $n$-$1$. 
Then the maximum squared overlap 
between hypergraph state $\ket{H_{n}^{n-1}}$ and the pure biseparable states is
\begin{eqnarray}
		\begin{dcases}
			\alpha(\ket{H_{4}^{3}})=\frac{3+\sqrt{5}}{8}\leq 
\frac{3}{4}\\
			\alpha(\ket{H_{n}^{n-1}})=\frac{2^{n-1}-n}{2^{n-1}}\,
\,\,\mathrm{for}\,\,n\,\mathrm{even},\,\,n\geq 6\\
			\alpha(\ket{H_{n}^{n-1}})=\frac{2^{n-1}-n+1}{2^{n-1}}
\,\,\,\mathrm{for}\,\,n\,\mathrm{odd}\,.
		\end{dcases}
\label{alpha-Hn-1}
\end{eqnarray}

The multipartite entanglement of hypergraph state $\ket{H_{n}^{n-1}}$ is then 
given by
\begin{eqnarray}
		\begin{dcases}
			E(\ket{H_{4}^{3}})=\frac{5-\sqrt{5}}{8}\geq 
\frac{1}{4}\\
			E(\ket{H_{n}^{n-1}})=\frac{n}{2^{n-1}}\,\,\,
\mathrm{for}\,\,n\,\mathrm{even},\,\,n\geq 6\\
			E(\ket{H_{n}^{n-1}})=\frac{n-1}{2^{n-1}}\,\,\,
\mathrm{for}\,\,n\,\mathrm{odd}\,.
		\end{dcases}
\label{E-Hn-1}
\end{eqnarray}
\end{theorem}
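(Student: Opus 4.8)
The plan is to follow the procedure outlined before Theorem \ref{theorem_multipartite_entanglement}, applied to the permutation-invariant state $\ket{H_{n}^{n-1}}$. First I would write $\ket{H_{n}^{n-1}}$ explicitly in the computational basis: applying $\prod_{i} C_{n-1}^{(1,\dots,\hat{i},\dots,n)}$ to $\ket{+}^{\otimes n}$ multiplies the amplitude of a basis vector $\ket{x}$ by $(-1)^{f(x)}$, where $f(x)$ counts how many of the $n$ hyperedges are ``active'' on $x$, i.e.\ how many $(n-1)$-subsets of the qubits are all set to $1$ in $x$. If $x$ has Hamming weight $n$ (all ones), all $n$ hyperedges fire, giving sign $(-1)^{n}$; if $x$ has weight $n-1$, exactly one hyperedge fires, giving sign $-1$; if $x$ has weight $\le n-2$, no hyperedge fires, giving sign $+1$. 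So the only ``flipped'' amplitudes are at weight $n-1$ (always negative) and at weight $n$ (negative iff $n$ is odd), which already explains the even/odd dichotomy in the statement.

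Next I would compute the maximum Schmidt coefficient for the $\bar A\bar B$ bipartition with $\bar B=\{n\}$, exactly as in the proof of Theorem \ref{theorem_multipartite_entanglement}. Group the basis states by the value of qubit $n$: $\ket{H_{n}^{n-1}}=\frac{1}{\sqrt 2}\bigl(\ket{\psi_0}\ket{0}+\ket{\psi_1}\ket{1}\bigr)$ for suitable unnormalized $(n-1)$-qubit vectors $\ket{\psi_0},\ket{\psi_1}$; here $\ket{\psi_0}$ is (up to signs only at weight $n-1$, i.e.\ at the all-ones string of the first $n-1$ qubits) the uniform superposition, and $\ket{\psi_1}$ has its sign pattern concentrated near the top weights. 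Diagonalizing the $2\times 2$ Gram matrix of $\{\ket{\psi_0},\ket{\psi_1}\}$ (equivalently, the reduced state of qubit $n$) yields $\bigl(s^{\bar A\bar B}_{\max}\bigr)^2$. I expect this to come out to $\tfrac{2^{n-1}-n}{2^{n-1}}$ for $n$ even and $\tfrac{2^{n-1}-n+1}{2^{n-1}}$ for $n$ odd, with the $n=4$ case genuinely exceptional: there the off-diagonal overlap is not negligible compared to the diagonal imbalance, so the $2\times2$ eigenvalue problem produces the irrational value $\tfrac{3+\sqrt5}{8}$ rather than a clean fraction — this is why $n=4$ must be stated separately, and I would simply carry out that small diagonalization by hand.

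The remaining — and main — step is to verify that no coarser bipartition does better: for every bipartition $A|B$ with $|B|=k$, $2\le k\le n/2$, I must show $\|\rho^{A}\|_{\infty}\le \bigl(s^{\bar A\bar B}_{\max}\bigr)^2$, where $\rho^{A}$ is the reduced state on the $n-k$ qubits in $A$; then Eq.\ (\ref{alpha}) and the bound $\lambda_{\max}(\rho^A)\le\|\rho^A\|_\infty$ from Definition \ref{def_infinity_norm} finish the argument, and by permutation invariance it suffices to take $A=\{1,\dots,n-k\}$. Here I would compute $\langle x|\rho^{A}|y\rangle = \frac{1}{2^n}\sum_{z\in\{0,1\}^k} (-1)^{f(xz)+f(yz)}$ for $(n-k)$-bit strings $x,y$; using the weight analysis above, each such entry is $\frac{1}{2^n}$ times $(2^k$ minus a small correction coming only from the $O(k)$ strings $z$ that push the total weight into $\{n-1,n\}$), so $\|\rho^A\|_\infty = \max_x \sum_y |\langle x|\rho^A|y\rangle|$ is maximized by the row indexed by the all-ones string $x=1^{n-k}$, giving a bound of order $\tfrac{2^{n-1}-O(k)}{2^{n-1}}$. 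The delicate part is pinning down the exact constants and signs in these row sums and confirming the inequality is strict (or an equality) for all $k$ in range and all $n\ge 6$ — in particular checking that $k=2$, the ``closest competitor'', still loses to the $k=1$ value. The $n=4$ case requires a separate, direct check that its single coarser bipartition ($k=2$) gives an infinity norm $\le\tfrac{3+\sqrt5}{8}$, which is a finite computation. Once $\alpha(\ket{H_n^{n-1}})$ is established, Eq.\ (\ref{E-Hn-1}) follows immediately from $E=1-\alpha$ via Definition \ref{def_bipartite_entanglement}.
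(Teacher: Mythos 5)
Your overall strategy is the same as the paper's (compute the maximum Schmidt coefficient for the single-qubit bipartition, then bound every coarser bipartition through $\lambda_{\max}(\rho^A)\le\lVert\rho^A\rVert_\infty$), and your sign analysis and the $k=1$ computation are correct for $n\ge 5$. However, your treatment of $n=4$ contains a genuine error. For $n=4$ the $2\times 2$ eigenvalue problem of the single-qubit bipartition does \emph{not} produce $\tfrac{3+\sqrt5}{8}$: the off-diagonal overlap vanishes there and the maximum squared Schmidt coefficient is exactly $\tfrac12$. The irrational value is the largest eigenvalue of the $4\times4$ reduced matrix $\rho^{(12)}$ for the $\{1,2\}|\{3,4\}$ bipartition, and it \emph{exceeds} the $k=1$ value $\tfrac12$ --- that is precisely why $n=4$ is exceptional. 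Consequently the check you propose for $n=4$, namely that the $k=2$ bipartition gives an infinity norm $\le\tfrac{3+\sqrt5}{8}$, is false: $\lVert\rho^{(12)}\rVert_\infty=\tfrac34>\tfrac{3+\sqrt5}{8}$. For $n=4$ the infinity-norm shortcut is not tight enough, and one must compute $\lambda_{\max}(\rho^{(12)})$ exactly (it equals $\tfrac{3+\sqrt5}{8}$) and conclude that the maximum over bipartitions is attained at $k=2$; as written, your plan would either break down at this point or, if you trusted the $\tfrac34$ bound, assign $n=4$ the wrong value.

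Two further points. First, the core of the theorem --- verifying $\lVert\rho^A\rVert_\infty\le\bigl(s^{\bar A\bar B}_{\max}\bigr)^2$ for all $2\le k\le n/2$, separately for $n\ge6$ even and $n\ge5$ odd --- is only announced in your sketch (``the delicate part''), whereas it constitutes essentially the whole appendix proof in the paper, which writes $\rho^{(n-k)}$ as a combination of three structured matrices and evaluates the two possible row sums $\mathcal N^1_\infty,\mathcal N^2_\infty$ in closed form. Second, your guess that the maximal row sum is attained by the all-ones row is incorrect: in that computation the maximum comes from the generic rows (index strings of low weight), whose row sum equals exactly $\tfrac{2^{n-1}-n}{2^{n-1}}$ in the even case (and is bounded by $\tfrac{2^{n-1}-n+1}{2^{n-1}}$ in the odd case), i.e.\ it reproduces the $k=1$ value, while the all-ones row sum is strictly smaller for $n-k\ge2$. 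So the architecture of your proposal is right, but the decisive computations are missing and the $n=4$ step as described would fail.
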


The complete proof of this result is reported in Appendix 
\ref{appendix_multipartite_entanglement}. The procedure is a straightforward 
generalization of the one applied to the single maximum-cardinality hyperedge. 
The maximum eigenvalues of the reduced density matrices do not increase for
increasing $k$, hence they remain lower than or equal to the squared maximum 
Schmidt coefficient with respect to the first bipartition 
$\bar{A} \bar{B}$. 
The only exception to this behaviour is the case $n=4$, that is the lowest 
possible even value. We distinguish the case of $n$ even from $n$ odd 
because of a 
difference in the sign of the coefficient in front of the computational basis 
element $\ket{11...\,1}$. While a hypergraph state $\ket{H_{n}^{n-1}}$ with 
$n$ even has $n$ negative coefficients, a hypergraph state 
$\ket{H_{n}^{n-1}}$ with $n$ odd has an additional negative sign in front of 
the component $\ket{11...\,1}$: when $n$ is even the negative signs 
introduced by the controlled-$Z$ gates compensate each other.

\subsection{Hypergraph states with all hyperedges of cardinality greater than or equal to $n$-$1$}
We here consider $n$-qubit hypergraph states endowed with all possible 
hyperedges of cardinality greater than or equal to $n$-$1$, namely
\begin{equation}
	\ket{H_{n}^{n-1,n}}=C_{n}^{(1,2,...\,n)}\prod_{i=1}^{n} 
C_{n-1}^{(1,2,...\,\hat{i},...\,n)}\ket{+}^{\otimes n}\,.
\end{equation}
\begin{theorem}[Multipartite entanglement -  Hyperedges of cardinality 
greater than or equal to $n$-$1$]
\label{theorem_multipartite_entanglement_n-1n}
	Let $\ket{H_{n}^{n-1,n}}$ be an $n$-qubit hypergraph state endowed 
with all possible hyperedges of cardinality greater than or equal to $n$-$1$. 
Then the maximum squared overlap between hypergraph state 
$\ket{H_{n}^{n-1,n}}$ and the pure biseparable states is
\begin{eqnarray}
		\begin{dcases}
			\alpha(\ket{H_{3}^{2,3}})=\frac{3}{4}\\
			\alpha(\ket{H_{n}^{n-1,n}})=\frac{2^{n-1}-n+1}
{2^{n-1}}\,\,\,\mathrm{for}\,\,n\,\mathrm{even}\\
			\alpha(\ket{H_{n}^{n-1,n}})=\frac{2^{n-1}-n}
{2^{n-1}}\,\,\,\mathrm{for}\,\,n\,\mathrm{odd},\,\,n\geq 5\,.
		\end{dcases}
\label{alpha-Hn,n-1}
\end{eqnarray}
The multipartite entanglement of hypergraph state $\ket{H_{n}^{n-1,n}}$ is
\begin{eqnarray}
		\begin{dcases}
			E(\ket{H_{3}^{2,3}})=\frac{1}{4}\\
			E(\ket{H_{n}^{n-1,n}})=\frac{n-1}{2^{n-1}}\,\,\,
\mathrm{for}\,\,n\,\mathrm{even}\\
			E(\ket{H_{n}^{n-1,n}})=\frac{n}{2^{n-1}}\,\,\,
\mathrm{for}\,\,n\,\mathrm{odd},\,\,n\geq 5\,.
		\end{dcases}
\label{E-Hn,n-1}
\end{eqnarray}
\end{theorem}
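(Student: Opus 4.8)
The plan is to apply verbatim the permutation-symmetry procedure set out above, as was done for Theorems \ref{theorem_multipartite_entanglement} and \ref{theorem_multipartite_entanglement_n-1}. First I would expand $\ket{H_{n}^{n-1,n}}$ in the computational basis. Since each gate $C_{k}^{e}$ multiplies $\ket{x}$ by $-1$ exactly when every qubit in $e$ is in state $\ket{1}$, the amplitude of $\ket{x}$ (up to the factor $2^{-n/2}$) depends only on the Hamming weight $w=w(x)$: the $n$-edge acts iff $w=n$, while the $(n-1)$-edge missing qubit $i$ acts iff $w\geq n-1$ and, when $w=n-1$, the single $0$ sits at position $i$. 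Counting these contributions gives amplitude $+1$ for $w\leq n-2$, $-1$ for $w=n-1$, and $(-1)^{n+1}$ for $w=n$; in particular the $n$-edge flips the sign of $\ket{11\dots1}$ relative to $\ket{H_{n}^{n-1}}$, as announced in the remark after Theorem \ref{theorem_multipartite_entanglement_n-1}. Since the hyperedge set is permutation invariant, $\alpha^{AB}$ depends only on $|B|$, so it suffices to treat the bipartitions $A=\{1,\dots,n-k\}$, $B=\{n-k+1,\dots,n\}$ for $1\leq k\leq\lfloor n/2\rfloor$.

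For $k=1$ I would split off qubit $n$ and group the strings $y$ on $\bar A$ by weight: the pair $(\ket{y}\ket{0},\ket{y}\ket{1})$ gives $\sqrt{2}\,\ket{y}\ket{+}$ when $w(y)\leq n-3$, gives $\sqrt{2}\,\ket{y}\ket{-}$ when $w(y)=n-2$, and, for the single $y=1\dots1$, gives $\ket{1\dots1}\otimes(-\ket{0}+(-1)^{n+1}\ket{1})$. Collecting terms, the Schmidt decomposition across $\bar A\bar B$ has exactly two terms; reading off the weights of the two Schmidt vectors yields squared coefficients $\{\frac{2^{n-1}-n+1}{2^{n-1}},\frac{n-1}{2^{n-1}}\}$ for $n$ even and $\{\frac{2^{n-1}-n}{2^{n-1}},\frac{n}{2^{n-1}}\}$ for $n$ odd. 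By (\ref{alpha}), $\alpha^{\bar A\bar B}$ is the larger of the two, which equals the stated $\alpha$ value whenever $2^{n-1}\geq 2n$ (for $n$ odd) or $2^{n-1}\geq 2n-2$ (for $n$ even); the only failure is $n=3$, where the second coefficient $\frac34$ dominates — this is precisely the exceptional case $\ket{H_{3}^{2,3}}$.

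Next I would bound $\alpha^{AB}$ for every $k>1$. Writing $m=n-k$ and $\rho^{A}=\rho^{(1\dots m)}$, one finds $\rho^{A}_{a,a'}=2^{-n}\langle v_{a},v_{a'}\rangle$, where $v_{a}\in\{\pm1\}^{2^{k}}$ is the vector whose $b$-th entry is the computational-basis amplitude sign attached to the total weight $w(a)+w(b)$; hence every diagonal entry equals $2^{-(n-k)}$, and $v_{a}$ depends on $a$ only through $w(a)$, being non-constant only when $w(a)\in\{m-1,m\}$. This collapses $\rho^{A}$ into a three-block pattern (rows indexed by $w(a)\leq m-2$, $w(a)=m-1$, $w(a)=m$), so $\parallel\rho^{A}\parallel_{\infty}$ is the largest of three row sums of absolute values; using $m+k=n$, each of these has a closed form, and the inequality $\parallel\rho^{A}\parallel_{\infty}\leq\left(s^{\bar A\bar B}_{max}\right)^{2}$ reduces to elementary estimates such as $2^{m-1}\geq m$ valid in the range $2\leq k\leq\lfloor n/2\rfloor$. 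Then $\lambda_{max}(\rho^{A})\leq\parallel\rho^{A}\parallel_{\infty}\leq\left(s^{\bar A\bar B}_{max}\right)^{2}$, so by (\ref{alpha}) $\alpha^{AB}\leq\alpha^{\bar A\bar B}$ for all $k>1$; combined with the previous paragraph this gives $\alpha(\ket{H_{n}^{n-1,n}})=\left(s^{\bar A\bar B}_{max}\right)^{2}$, and (\ref{def-multient}) yields the entanglement values via $E=1-\alpha$.

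The main obstacle is the bookkeeping in the last step: the terms $|2^{k}-2k|$, $|2^{k}-2(k+1)|$ and the like change sign for the smallest admissible values $k=2,3$, so the closed-form row sums — and the final comparison — must be verified case by case there (at $k=2$, $n=4$ the bound is in fact attained with equality), and the vectors $v_{a}$ and their overlaps must be computed separately for $n$ even, where the sign $-1$ covers the whole top pair of weight layers, and $n$ odd, where only the layer $w=n-1$ is flipped. Disposing of these finitely many low-$k$ cases, together with the genuinely exceptional $n=3$, is where the effort lies; the rest is the routine of Theorems \ref{theorem_multipartite_entanglement} and \ref{theorem_multipartite_entanglement_n-1}.
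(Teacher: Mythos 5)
Your proposal is correct and follows essentially the same route as the paper's proof: the paper's appendix treats Theorem \ref{theorem_multipartite_entanglement_n-1} in detail and proves the present theorem by the identical procedure (Schmidt decomposition across the single-qubit cut, then the infinity-norm bound $\lambda_{max}(\rho^{(1\dots n-k)})\leq\parallel\rho\parallel_{\infty}\leq(s^{\bar A\bar B}_{max})^{2}$ for $k\geq 2$), with exactly the parity swap induced by the extra $n$-hyperedge and the exceptional case $n=3$ that you identify. Your sign analysis, the two squared Schmidt coefficients for $k=1$, and the low-$k$ case checks all match the paper's computation, so no further changes are needed.
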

The complete proof of this result is reported in Appendix 
\ref{appendix_multipartite_entanglement}. The procedure is a straightforward 
generalization of the one applied to the single maximum-cardinality 
hyperedge case 
and it is analogous to the procedure applied to prove Theorem 
\ref{theorem_multipartite_entanglement_n-1}, it only differs 
in the opposite role played by the parity of  the number of qubits $n$. 
The additional $n$-hyperedge, with respect to the previous case, changes the 
sign of the coefficient in front of the component $\ket{11...\,1}$. 
We here hence distinguish the case of $n$ even from case of $n$ odd, just 
like we
did in the previous case, perform the same demonstrative procedure and find 
inverted formulas between the two cases of $n$ even and $n$ odd. As expected, 
the only exception here is for $n=3$, that is the lowest possible value for 
$n$ odd, similarly to the previous exception of the case $n=4$ 
(the lowest possible value for $n$ under the hypothesis of 
Theorem \ref{theorem_multipartite_entanglement_n-1}).

\subsection{Lower bound to the multipartite entanglement of a generic hypergraph state}
\begin{theorem}[Multipartite entanglement - General case]
\label{theorem_multipartite_entanglement_general}
	Let $\ket{H_{n}^{k_{max}}}$ be an $n$-qubit connected hypergraph state of maximum hyperedge-cardinality equal to $k_{max}$. Then its overlap with the pure biseparable states is upper bounded by
\begin{eqnarray}
		\alpha(\ket{H_{n}^{k_{max}}}) \leq \frac{2^{k_{max}-1}-1}{2^{k_{max}-1}}\,.
\label{alpha-gen}
\end{eqnarray}
Its multipartite entanglement is hence lower bounded by
\begin{eqnarray}
		E(\ket{H_{n}^{k_{max}}}) \geq \frac{1}{2^{k_{max}-1}}\,.
\label{E-gen}
\end{eqnarray}
\end{theorem}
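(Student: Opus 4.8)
The plan is to convert the statement, through Eq.~(\ref{alpha}) and Definition~\ref{def_bipartite_entanglement}, into a bound that must hold across \emph{every} bipartition of $\ket{H_n^{k_{max}}}$, and then to squeeze that bound out of the only two hypotheses we are given: connectedness of $H$ and the presence of a hyperedge of cardinality $k_{max}$. Since $\alpha(\ket{H_n^{k_{max}}})=\max_{AB}\alpha^{AB}$ and, by Eq.~(\ref{alpha}), $\alpha^{AB}=(s_{max}^{AB})^{2}=\lambda_{max}(\rho^{A})$ with $\rho^{A}=\Tr_{B}\ket{H}\bra{H}$, it is enough to prove $\lambda_{max}(\rho^{A})\le(2^{k_{max}-1}-1)/2^{k_{max}-1}$ for a fixed but arbitrary nontrivial cut $A|B$.

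The first concrete step is to expand the controlled-$Z$ gates in Eq.~(\ref{def-H}) into $\ket{H}=2^{-n/2}\sum_{x\in\{0,1\}^{n}}(-1)^{f(x)}\ket{x}$ with the $\mathrm{GF}(2)$ polynomial $f(x)=\bigoplus_{e\in E}\prod_{i\in e}x_{i}$, and then to compute, writing $x=(a,b)$ along the cut, $\rho^{A}_{a,a'}=2^{-|A|}(-1)^{c_{a,a'}}\beta(F_{a,a'})$, where $c_{a,a'}$ is a sign produced by the hyperedges lying inside $A$, $\beta(F):=2^{-|B|}\sum_{b}(-1)^{F(b)}$ is the bias of the polynomial $F_{a,a'}(b)=\bigoplus_{e}\big(\prod_{i\in e\cap A}a_{i}\oplus\prod_{i\in e\cap A}a'_{i}\big)\prod_{j\in e\cap B}b_{j}$, and only the hyperedges crossing the cut contribute --- each via a monomial in the $b$-variables of degree $|e\cap B|\le|e|-1\le k_{max}-1$. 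In particular $\rho^{A}$ has constant diagonal $2^{-|A|}$. I would then invoke the minimum-distance bound of the Reed--Muller code $\mathrm{RM}(d,m)$ (a nonzero $\mathrm{GF}(2)$ polynomial of degree $d$ in $m$ variables takes each value on at least $2^{m-d}$ inputs): it gives $|\beta(F)|\le 1-2^{1-d}$ for every nonzero such $F$, and $|\beta(F)|=1$ only when $F\equiv0$. Since $\deg F_{a,a'}\le k_{max}-1$ and $F_{a,a'}(0)=0$, every off-diagonal entry of $2^{|A|}\rho^{A}$ is thus either $\pm1$ or bounded in modulus by $1-2^{2-k_{max}}$. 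Moreover ``$a\sim a'$ iff $F_{a,a'}\equiv0$'' is the fibre partition of the Boolean map $a\mapsto M(a):=(\bigoplus_{e:\,e\cap B=\tau}\prod_{i\in e\cap A}a_{i})_{\tau}$, which by positivity of $\rho^{A}$ is an equivalence relation; so $\rho^{A}$ is block-structured, each diagonal block being a rank-one $\pm1$ outer product and the off-block entries being small.

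The quantitative core is then to control the largest block. Connectedness forces at least one hyperedge to cross the cut, so $M$ has a nonconstant coordinate $m_{\tau}$; its leading part comes from the crossing hyperedges with $B$-shadow $\tau$ and largest $A$-part, which yield distinct, hence non-cancelling, monomials, so $m_{\tau}$ is a nonzero polynomial of degree $d_{\tau}=\max_{e:\,e\cap B=\tau}|e\cap A|\le k_{max}-|\tau|\le k_{max}-1$. Applying the Reed--Muller bound once more, the largest fibre of $m_{\tau}$ --- hence the largest block of $\rho^{A}$ --- has size at most $2^{|A|}-2^{|A|-d_{\tau}}\le 2^{|A|}-2^{|A|+1-k_{max}}$. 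Were $\rho^{A}$ block-diagonal this would already give $\lambda_{max}(\rho^{A})\le 1-2^{1-k_{max}}$; to absorb the off-block entries one feeds their modulus bound $1-2^{2-k_{max}}$ into an eigenvalue estimate --- passing, when convenient, through $\lambda_{max}(\rho^{A})\le 2^{-|A|}\|2^{|A|}\rho^{A}\|_{\infty}$, which alone settles the matter whenever some crossing hyperedge meets one side of the cut in a single vertex. Plugging the resulting inequality into Eq.~(\ref{def-multient}) gives $E(\ket{H_n^{k_{max}}})=1-\alpha\ge 2^{1-k_{max}}$.

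The main obstacle is exactly that last absorption step for bipartitions that split the maximal hyperedge (or several maximal hyperedges) in a balanced way: there the plain infinity-norm estimate used for the symmetric families of Theorems~\ref{theorem_multipartite_entanglement}--\ref{theorem_multipartite_entanglement_n-1n} overestimates $\lambda_{max}(\rho^{A})$, and one is forced to estimate the top eigenvalue directly from the block (fibre) decomposition, tracking how the crossing hyperedges distribute among the fibres of $M$ and how their biases combine with the $\pm1$ blocks. The remaining ingredients --- reduction to single bipartitions, the explicit form of $\rho^{A}$, and the Reed--Muller weight bounds --- are routine once the $\mathrm{GF}(2)$-polynomial picture is set up.
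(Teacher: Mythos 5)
Your route is genuinely different from the paper's (phase-polynomial expansion over GF(2), bias/Reed--Muller bounds, fibre decomposition of $\rho^{A}$ into rank-one $\pm1$ blocks), and the algebraic groundwork checks out: the entries $\rho^{A}_{a,a'}=2^{-|A|}(-1)^{g(a)+g(a')}\beta(F_{a,a'})$, the degree bound $\deg F_{a,a'}\leq k_{max}-1$, the criterion $F_{a,a'}\equiv 0$ iff $M(a)=M(a')$, and the largest-block bound $2^{|A|}\left(1-2^{1-k_{max}}\right)$ are all correct. But the proof is not complete, and the gap is exactly the step you yourself flag as ``the main obstacle'': for bipartitions that cut the maximal hyperedge(s) in a balanced way you must show that the off-block entries, each of modulus up to $1-2^{2-k_{max}}$, do not lift $\lambda_{max}(\rho^{A})$ above $1-2^{1-k_{max}}$. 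Nothing you write accomplishes this. The infinity-norm estimate cannot: a row of $2^{|A|}\rho^{A}$ contains up to (block size) unit-modulus entries plus exponentially many off-block entries whose moduli need not be small, so the row sums approach $2^{|A|}$ and the bound degenerates. Generic perturbation estimates (Weyl-type splitting into block-diagonal plus off-block parts) also fail, because the off-block part has no small operator-norm bound. Since the claimed inequality $\lambda_{max}(\rho^{A})\leq 1-2^{1-k_{max}}$ is not established for an arbitrary cut, neither (\ref{alpha-gen}) nor (\ref{E-gen}) follows from what you have written; as it stands this is a plausible programme, not a proof.

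For comparison, the paper sidesteps any spectral computation for a general hypergraph state: given a cut $AB$, it picks a crossing hyperedge of maximal cardinality $\kappa\leq k_{max}$ and reduces the state, using only operations local with respect to the cut ($\sigma_{z}$ measurements on the qubits outside that hyperedge, local Pauli corrections removing internal $(\kappa-1)$-hyperedges, controlled gates acting entirely within one side, and further single-qubit measurements when lower-cardinality crossing hyperedges survive), to a single-hyperedge state $\ket{G_{\kappa'}}$ with $\kappa'\leq k_{max}$. LOCC monotonicity of bipartite entanglement then gives $E^{AB}(\ket{H_{n}^{k_{max}}})\geq E^{AB}(\ket{G_{\kappa'}})\geq 2^{1-\kappa'}\geq 2^{1-k_{max}}$ by Theorem \ref{theorem_multipartite_entanglement}, and minimizing over cuts yields the claim. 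If you want to rescue your spectral approach, the missing idea has to live precisely at the absorption step — for instance a fibre-refined eigenvalue estimate that tracks how crossing hyperedges distribute over the blocks, or an argument that effectively reproduces the paper's LOCC reduction in the polynomial picture.
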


\begin{proof}
The proof of the theorem may be outlined as follows.
\begin{itemize}
\item Given an $n$-qubit connected hypergraph state $\ket{H_{n}^{k_{max}}}$ of maximum hyperedge-cardinality equal to $k_{max}$, we consider a possible bipartition $AB$. Among the hyperedges that cross the bipartition we choose
one with the highest cardinality, which we denote as $\kappa$: by definition $\kappa \leq k_{max}$. The reason why we choose an hyperedge with the highest cardinality  will be made clear in the next steps.\\
\item  We show that hypergraph state $\ket{H_{n}^{k_{max}}}$ may always be reduced to a mixture of single-hyperedge hypergraph states $\ket{G_{\kappa'}}$ with $\kappa'\leq\kappa \leq k_{max}$ by only means of operations that are local with respect to the chosen bipartition $AB$.\\
\item Given the non-increasing property of bipartite entanglement under LOCCs, the entanglment of the initial state is greater than or equal to the weighted average of the entanglement values of the single states belonging to the mixture, with weights given by the probabilities of the measurement outcomes. In particular the initial entanglement is greater than or equal to the minimum value entering the weighted average. We deduce that
\begin{displaymath}
	E^{AB}(\ket{H_{n}^{k_{max}}})\geq E^{AB}(\ket{G_{\kappa'}})
\end{displaymath}
where $\kappa'$ is the maximum cardinality within the above mixture.\\
\item Recalling that multipartite entanglement is defined as the minimum of bipartite entanglement over all possible bipartitions and applying Theorem \ref{theorem_multipartite_entanglement}, it follows that
\begin{displaymath}
	E^{AB}(\ket{H_{n}^{k_{max}}})\geq E(\ket{G_{\kappa'}})=\frac{1}{2^{\kappa'-1}}
\end{displaymath}
where $E(\ket{G_{\kappa'}})$ denotes the minimum of $E^{AB}(\ket{G_{\kappa'}})$ over all possible bipartitions AB.\\
\item We conclude by observing that the minimum value of the multipartite entanglement is attained when $\kappa=\kappa'=k_{max}$. In general $\kappa'\leq \kappa\leq k_{max}$ but it is possible that $\kappa' =k_{max}$ if the initially considered bipartition crosses a $k_{max}$-hyperedge. In general if $\kappa_{1}\leq\kappa_{2}$ then $\frac{1}{2^{\kappa_{1}-1}}\geq\frac{1}{2^{\kappa_{2}-1}}$ and, since we are looking for the minimum, this motivates the choice of the hyperedge with highest cardinality at the first step. This leads to
\begin{displaymath}
	E(\ket{H_{n}^{k_{max}}})\geq E(\ket{G_{k_{max}}})=\frac{1}{2^{k_{max}-1}}\,.
\end{displaymath}
\end{itemize}

In order to complete the proof it hence suffices to show how to reduce $\ket{H_{n}^{k_{max}}}$ to a single-hyperedge hypergraph state $\ket{G_{\kappa'}}$ with $\kappa'\leq k_{max}$ by only means of operations that are local with respect to the chosen bipartition $AB$ and single-qubit measurements. This may be achieved through the following iterative procedure (see Fig.\ref{fig_lower_bound} for an example).
\begin{itemize}
\item Given a bipartition $AB$, choose one of the hyperedges with the highest cardinality crossed by the bipartition and call $\kappa$ its cardinality.\\
\item Perform $\pauli{z}$ measurements on the $n-\kappa$ qubits not belonging to the chosen hyperedge (Fig.\ref{fig_lower_bound}, step $1$). The resulting state will be of the form
\begin{displaymath}
	\ket{H_{\kappa}}\ket{\phi^{(1)}}\ket{\phi^{(2)}}...\,\ket{\phi^{(n-\kappa)}}
\end{displaymath}
where $\ket{H_{\kappa}}$ is a $\kappa$-qubit hypergraph state with an hyperedge with highest cardinality $\kappa$ and possibly other internal lower-cardinality hyperedges; state $\ket{\phi^{(i)}}\in \lbrace \ket{0},\ket{1}\rbrace$ is the single-qubit state of the qubit at vertex $i$ and depends on the corresponding measurement output. These measurements do not delete the chosen $\kappa$-hyperedge but may cause the appearance of internal lower-cardinality hyperedges \cite{qu2013}.\\
\item Remove all internal hyperedges of cardinality $\kappa-1$ by means of local Pauli operations \cite{guhne2014} (Fig.\ref{fig_lower_bound}, step $2$). This may introduce edges of lower cardinality that in general may not be removed by only means of LOCCs.\\
\item Remove all hyperedges of cardinality $k'<\kappa-1$ that do not cross the chosen bipartition by means of controlled gates of the form $C_{k'}$; even if these are not single-qubit transformations they are local with respect to the chosen bipartition. The non-increasing property of bipartite entanglement under LOCCs therefore applies to this case as well.\\
\item Stop if at this stage all lower-cardinality hyperedges have been removed, i.e. the initial state has been reduced to a state of the form $\ket{G_{\kappa}}\ket{\phi^{(1)}}\ket{\phi^{(2)}}...\,\ket{\phi^{(n-\kappa)}}$ (Fig.\ref{fig_lower_bound}, step $3$ left).\\
\item If this is not the case (Fig.\ref{fig_lower_bound}, step $3$ right), it means that the remaining state is still of the form $\ket{H_{\kappa}}\ket{\phi^{(1)}}\ket{\phi^{(2)}}...\,\ket{\phi^{(n-\kappa)}}$ where $\ket{H_{\kappa}}$ is a $\kappa$-qubit hypergraph state with an hyperedge with highest cardinality $\kappa$ and possibly other internal lower-cardinality hyperedges. Consider then the lower-cardinality hyperedges that remain: they all cross the bipartition because those not crossing the bipartition were removed in the previous steps, moreover they are all of cardinality strictly lower than $\kappa-1$. Select an hyperedge with highest cardinality and denote its cardinality with $\tilde{\kappa}$. Measure one of the qubits outside the $\tilde{\kappa}$-hyperedge but still within the $\kappa$-hyperedge; this may cause, depending on the measurement outcome, the appearance of a ($\kappa$-$1$)-hyperedge crossing the bipartition. Select now again an hyperedge with the highest cardinality among those crossing the bipartition, call $\kappa'$ its cardinality, repeat the procedure from the beginning replacing $\kappa$ with $\kappa'$.
\end{itemize}
\end{proof}

\begin{figure}[h!]
	\centering
	\begin{tikzpicture}[scale=1.5]
		\path (-1,0) node[label=right:\(q_3\)] (BL_q3) {$\bullet$}; \path (-2,0) node[label=left:\(q_4\)] (BL_q4) {$\bullet$};
		\path (-1,0.5) node[label=right:\(q_2\)] (BL_q2) {$\bullet$}; \path (-2,0.5) node[label=left:\(q_5\)] (BL_q5) {$\bullet$};
		\path (-1.5,1) node[label=below:\(q_1\)] (BL_q1) {$\bullet$};
		\draw [thick, black] (BL_q3.center)--(BL_q4.center);
		\path (-1.9,1.3) node[] (BLa) {};\path (-1.2,-0.4) node[] (BLb) {};\draw [thick, red] (BLa)--(BLb);
		\path (1,0) node[label=left:\(q_4\)] (BR_q4) {$\bullet$}; \path (2,0) node[label=right:\(q_3\)] (BR_q3) {$\bullet$};
		\path (1,0.5) node[label=left:\(q_5\)] (BR_q5) {$\bullet$}; \path (2,0.5) node[label=right:\(q_2\)] (BR_q2) {$\bullet$};
		\path (1.5,1) node[label=below:\(q_1\)] (BR_q1) {$\bullet$};
		\draw [thick, black] (BR_q3.center)--(BR_q4.center);
		\draw [thick,green] plot [smooth cycle] coordinates {(2.1,0)(1,-0.05)(1,0.6)};
		\path (1.8,-0.4) node[] (BRa) {};\path (1.1,1.3) node[] (BRb) {};\draw [thick, red] (BRa)--(BRb);
		\path (-0.5,2) node[label=left:\(q_4\)] (BM_q4) {$\bullet$}; \path (0.5,2) node[label=right:\(q_3\)] (BM_q3) {$\bullet$};
		\path (-0.5,2.5) node[label=left:\(q_5\)] (BM_q5) {$\bullet$}; \path (0.5,2.5) node[label=right:\(q_2\)] (BM_q2) {$\bullet$};
		\path (0,3) node[label=below:\(q_1\)] (BM_q1) {$\bullet$};
		\draw [thick,blue] plot [smooth cycle] coordinates {(0.6,2.6)(0.6,1.9)(-0.6,1.9)(-0.6,2.6)};
		\draw [thick, black] (BM_q3.center)--(BM_q4.center);
		\path (-0.4,3.3) node[] (TMa) {};\path (0.3,1.5) node[] (TMb) {};\draw [thick, red] (TMa)--(TMb);
		\path (-0.5,6) node[label=left:\(q_4\)] (TM_q4) {$\bullet$}; \path (0.5,6) node[label=right:\(q_3\)] (TM_q3) {$\bullet$};
		\path (-0.5,6.5) node[label=left:\(q_5\)] (TM_q5) {$\bullet$}; \path (0.5,6.5) node[label=right:\(q_2\)] (TM_q2) {$\bullet$};
		\path (0,7) node[label=below:\(q_1\)] (TM_q1) {$\bullet$};
		\draw [thick,blue] plot [smooth cycle] coordinates {(0.6,6.6)(0.6,5.9)(-0.6,5.9)(-0.6,6.6)};
		\draw [thick, black] (TM_q3.center)--(TM_q4.center);
		\draw [thick, black] (TM_q1.center)--(TM_q2.center);
		\draw [thick,green] plot [smooth cycle] coordinates {(0.6,6)(-0.55,5.95)(-0.55,6.6)};
		\path (-0.4,7.3) node[] (TMa) {};\path (0.3,5.5) node[] (TMb) {};\draw [thick, red] (TMa)--(TMb);
		\path (1,4) node[label=left:\(q_4\)] (TR_q4) {$\bullet$}; \path (2,4) node[label=right:\(q_3\)] (TR_q3) {$\bullet$};
		\path (1,4.5) node[label=left:\(q_5\)] (TR_q5) {$\bullet$}; \path (2,4.5) node[label=right:\(q_2\)] (TR_q2) {$\bullet$};
		\path (1.5,5) node[label=below:\(q_1\)] (TR_q1) {$\bullet$};
		\draw [thick,blue] plot [smooth cycle] coordinates {(2.1,4.6)(2.1,3.9)(0.9,3.9)(0.9,4.6)};
		\draw [thick, black] (TR_q3.center)--(TR_q4.center);
		\draw [thick,green] plot [smooth cycle] coordinates {(2.1,4)(1,3.95)(1,4.6)};
		\draw [thick, violet] (TR_q2) circle (1.2mm);
		\path (1.8,3.5) node[] (TRa) {};\path (1.1,5.3) node[] (TRb) {};\draw [thick, red] (TRa)--(TRb);
		\path (-1,4) node[label=right:\(q_3\)] (TL_q3) {$\bullet$}; \path (-2,4) node[label=left:\(q_4\)] (TL_q4) {$\bullet$};
		\path (-1,4.5) node[label=right:\(q_2\)] (TL_q2) {$\bullet$}; \path (-2,4.5) node[label=left:\(q_5\)] (TL_q5) {$\bullet$};
		\path (-1.5,5) node[label=below:\(q_1\)] (TL_q1) {$\bullet$};
		\draw [thick,blue] plot [smooth cycle] coordinates {(-0.9,4.6)(-0.9,3.9)(-2.1,3.9)(-2.1,4.6)};
		\draw [thick, black] (TL_q3.center)--(TL_q4.center);
		\draw [thick,green] plot [smooth cycle] coordinates {(-0.9,4)(-2,3.95)(-2,4.6)};
		\path (-1.9,5.3) node[] (TLa) {};\path (-1.2,3.5) node[] (TLb) {};\draw [thick, red] (TLa)--(TLb);
		\draw [->,thick,black] (TM_q4)--(TL_q1);\draw [->,thick,black] (TM_q3)--(TR_q1);
		\draw [->,thick,black] (-1.5,3.8)--(BM_q5); \draw [->,thick,black] (1.5,3.8)--(BM_q2);
		\draw [->,thick,black] (BM_q4)--(BL_q1);\draw [->,thick,black] (BM_q3)--(BR_q1);
		\path (-1.5,5.5) node[] (L1) {$(M_{\pauli{z}^{(1)}},0)$}; \path (1.5,5.5) node[] (L2) {$(M_{\pauli{z}^{(1)}},1)$};
		\path (-1.5,3) node[] (L3) {$\pauli{x}^{(2)}$}; \path (1.5,3) node[] (L4) {$\pauli{z}^{(2)}\,\pauli{x}^{(2)}$};
		\path (-1.5,1.5) node[] (L5) {$(M_{\pauli{z}^{(2)}},0)$}; \path (1.5,1.5) node[] (L6) {$(M_{\pauli{z}^{(2)}},1)$};
		\path (1,-2) node[label=left:\(q_5\)] (BBR_q5) {$\bullet$}; \path (2,-2) node[label=right:\(q_2\)] (BBBR_q2) {$\bullet$};
		\path (1,-2.5) node[label=left:\(q_4\)] (BBR_q4) {$\bullet$}; \path (2,-2.5) node[label=right:\(q_3\)] (BBR_q3) {$\bullet$};
		\path (1.5,-1.5) node[label=below:\(q_1\)] (BBR_q1) {$\bullet$};
		\draw [thick,green] plot [smooth cycle] coordinates {(2.1,-2.5)(1,-2.55)(1,-1.9)};
		\draw [->,thick,black] (1.5,-0.5)--(BBR_q1); \path(1.8,-1) node[] (L7) {$\pauli{x}^{(5)}$};
		\path (1.8,-2.9) node[] (BBRa) {};\path (1.1,-1.2) node[] (BBRb) {};\draw [thick, red] (BBRa)--(BBRb);
	\end{tikzpicture}
	\caption{Procedure to transform a $5$-qubit hypergraph state of maximum hyperedge cardinality equal to $4$ probabilistically into a single-hyperedge hypergraph state by only means of transformations that are local with respect to a chosen bipartition. The exemplifying bipartition is $A=\lbrace 1,2,3 \rbrace$ and $B=\lbrace 4,5\rbrace$ (red inclined line). $(M_{\pauli{z}^{(i)}},j)$ denotes a $\pauli{z}$ measurement to be performed on qubit $i$ with outcome $j$. Depending on the measurement outcomes the output state may be either of the form $\ket{G_{2}}\ket{\phi^{(1)}}\ket{\phi^{(2)}}\ket{\phi^{(5)}}$ or of the form $\ket{G_{3}}\ket{\phi^{(1)}}\ket{\phi^{(2)}}$. This leads to a lower bound of $\frac{1}{4}$ for the bipartite entanglement with respect to this choice of the bipartition.}
	\label{fig_lower_bound}
\end{figure}
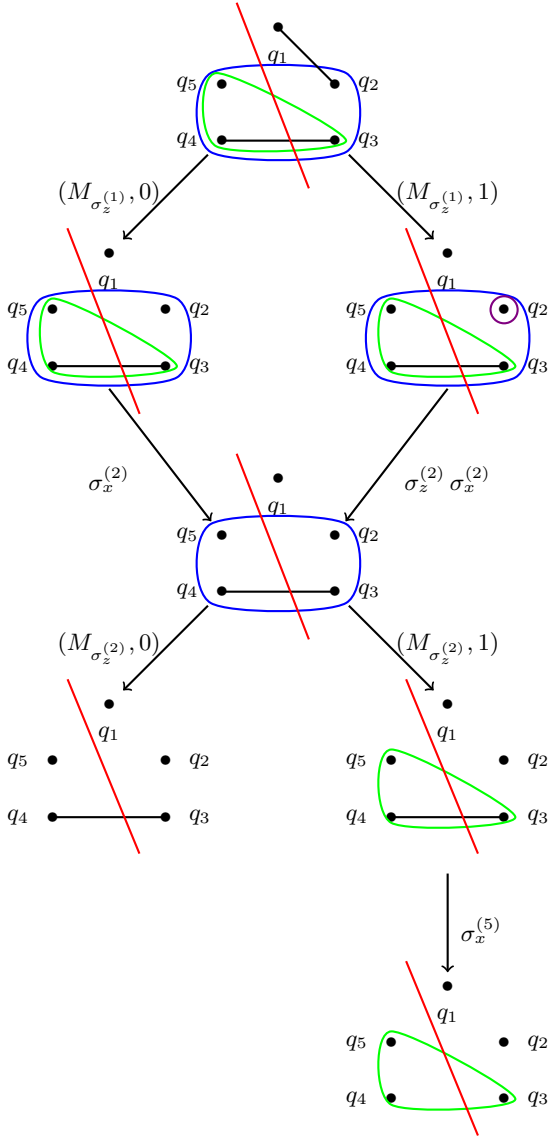

\section{Entanglement witnesses in the hypergraph formalism}
\label{section_entanglement_witnesses}

In this section we apply the multipartite entanglement results of Section 
\ref{section_multipartite_entanglement} to the construction of 
\textit{entanglement witnesses}. We first derive entanglement witnesses of 
the projective type, then propose a class of witnesses based on the 
stabilizer formalism, hence called \textit{stabilizer witnesses}, requiring a 
lower number of measurement settings than the projective ones. The efficiency 
of the constructed witnesses is evaluated on the basis of their robustness to 
noise and of the number of local measurement settings required by each of 
them in order to be measured.\\

Let $W_{n}$ be an entanglement witness able to detect entanglement in the 
neighbourhood of the $n$-qubit hypergraph state $\ket{H_{n}}$. Let $R_{p}$ be 
the hypergraph state $\ket{H_{n}}$ after the action of some white noise, 
i.e. $R_{p}=p\frac{\mathbb{I}}{2^{n}}+(1-p)\ket{H_{n}}\bra{H_{n}}$ with 
$0\leq p \leq 1$. We define as \textit{robustness parameter} the limit value 
$p^{L}_{n}$ for $p$, such that $Tr[R_{p}W_{n}]<0$ for all $p<p^{L}_{n}$. Note 
that $p^{L}_{n}$ also quantifies the dimensions of neighbourhood where 
$W_{n}$ is able to detect entanglement.\\
We remind that a generic witness operator can be decomposed in terms of a 
set of
local observables $\mathcal{O}=\lbrace O^{(i)} \rbrace_{i=1}^{m\leq n}$, 
i.e. $O^{(i)}$ acts on qubit $i$ for $i=1,2,...1,m\leq n$ \cite{ourwit}. We 
say that we  measure the \textit{local measurement setting} $\mathcal{O}$ 
if we perform  the simultaneous von Neumann measurement of the observables 
in $\mathcal{O}$. The evaluation of the number of local measurement settings
required to measure the expectation value of each witness operator
is reported in Appendix \ref{appendix_measurement}.

\subsection{Projector-based entanglement witnesses}

In the following we list the projector-based entanglement witnesses that we
constructed, specifying 
the value of their associated robustness parameter and the number of local 
measurement settings required. For the detection of multipartite entanglement
in the neighbourhood of  the state $\ket{G_{n}}$, the
 projector-based witness reads

\begin{eqnarray}
	W_{n}=\frac{2^{n-1}-1}{2^{n-1}}\mathbb{I}-\ket{G_{n}}\bra{G_{n}},
\label{W-Gn}
\end{eqnarray}
which  needs $\frac{3^{n}-1}{2}$ local measurement settings, see App. B,
 and has robustness
parameter 
$p_{n}^{L}=\frac{2}{2^{n}-1}$. Because of the maximum-cardinality 
hyperedge, this 
case is the worst case scenario regarding the number of measurement 
settings. All of the following projective witnesses require a number of local 
measurement settings lower than or equal to (but possibly as high as) 
$\frac{3^{n}-1}{2}$. 

Starting from the state $\ket{H_{n}^{n-1}}$ we
have the projector-based witnesses

\begin{eqnarray}
	\begin{dcases}
		W_{4}=\frac{3+\sqrt{5}}{8}\mathbb{I}-\ket{H_{4}^{3}}
\bra{H_{4}^{3}},\,\,\,n=4\\
		W_{n}=\frac{2^{n-1}-n}{2^{n-1}}\mathbb{I}-\ket{H_{n}^{n-1}}
\bra{H_{n}^{n-1}},\,\,\,n\geq 6\,\text{even}\\
		W_{n}=\frac{2^{n-1}-n+1}{2^{n-1}}\mathbb{I}-
\ket{H_{n}^{n-1}}\bra{H_{n}^{n-1}},\,\,\,n\geq 3\,\text{odd}\,.
	\end{dcases}
\label{W-Hn-1}
\end{eqnarray}
with $p_{4}^{L}=\frac{10-2\sqrt{5}}{15}$, $p_{n}^{L}=\frac{2n}{2^{n}-1}$ for 
$n$ even, $p_{n}^{L}=\frac{2(n-1)}{2^{n}-1}$ for $n$ odd.

The projector-based witnesses for hypergraph states with hyperedges of
cardinality $n$ and $n-1$ read 

\begin{eqnarray}
	\begin{dcases}
		W_{3}=\frac{3}{4}\mathbb{I}-\ket{H_{3}^{2,3}}
\bra{H_{3}^{2,3}},\,\,\,n=3 \\
		W_{n}=\frac{2^{n-1}-n}{2^{n-1}}\mathbb{I}-
\ket{H_{n}^{n-1,n}}\bra{H_{n}^{n-1,n}},\,\,\,n\geq 5\,\text{odd}\\
		W_{n}=\frac{2^{n-1}-n+1}{2^{n-1}}\mathbb{I}-
\ket{H_{n}^{n-1,n}}\bra{H_{n}^{n-1,n}},\,\,\,n\geq 4\,\text{even}\,.
	\end{dcases}
\label{W-Hn,n-1}
\end{eqnarray}
with $p_{3}^{L}=\frac{2}{7}$, $p_{n}^{L}=\frac{2(n-1)}{2^{n}-1}$ for $n$ 
even, $p_{n}^{L}=\frac{2n}{2^{n}-1}$ for $n$ odd.

Finally, for a generic hypergraph state with hyperedges of maximum cardinality 
${k_{max}}$ we have the witness

\begin{eqnarray}
W_{n}=\frac{2^{k_{max}-1}-1}{2^{k_{max}-1}}\mathbb{I}-\ket{H^{k_{max}}_{n}}
\bra{H^{k_{max}}_{n}}\,
\label{W-Hn,n-1}
\end{eqnarray}
with robustness threshold $p_{n}^{L}=\frac{2^{n-k_{max}+1}}{2^{n}-1}$.

\subsection{Stabilizer entanglement witnesses}

We now construct entanglement witnesses of the form
\begin{equation}
\label{witness_stabilizer}
	\tilde{W}_{n}=\beta_{n} \mathbb{I}-\sum_{i=1}^{n}K_{i}
\end{equation}
with $\beta_{n}\in\mathbb{R}_{+}$, exploiting the stabilizer formalism and 
generalizing the procedure proposed in Refs.\hspace{-0.1cm} \cite{toth2005} 
and \cite{toth20052}. As mentioned above, projector-based entanglement 
witnesses need a number of 
local measurement settings that in general is exponentially growing with the 
number of qubits. 
The aim of the stabilizer construction is hence to improve this experimental 
efficiency. The stabilizer entanglement witnesses we propose indeed need a 
number of local measurement settings that grows linearly with the number of 
qubits. However, they are less fine that the projector-based ones and 
display a lower robustness parameter.\\

In order to determine suitable values 
for $\beta_n$ such that $\tilde{W}_{n}$ is an entanglement witness  we require 
that $\tilde{W}_{n} - C\,W_{n}\geq 0$ for some positive constant $C>0$. If 
this holds we have that
\begin{equation}
\label{witnesses_inequality}
	Tr[\rho \tilde{W}_{n}]\geq CTr[\rho W_{n}]
\end{equation}
and $\tilde{W}_{n}$ is still a good entanglement witness. Its robustness 
parameter is $\frac{n-\beta_{n}}{n}$. In order to maximize 
$\tilde{p}_{L}^{n}$ we hence need to minimize $\beta_{n}$.\\

In order to require (\ref{witnesses_inequality}) we compare the two witnesses 
$W_{n}$ and $\tilde{W}_{n}$ by means of the hypergraph state basis. 
The action of $W_{n}$ on the hypergraph state basis \eqref{def-Hyperbasis} is

\begin{eqnarray}
	\begin{cases}
		W_{n}\ket{\phi_{00...0}}=\alpha(\ket{H_{n}}) -1\\
		W_{n}\ket{\phi_{x\neq 00...0}}=\alpha(\ket{H_{n}})
	\end{cases}
\label{W-basis}
\end{eqnarray}
Inequality \eqref{witnesses_inequality} results in the following set of 
constraints for the parameter $\beta_{n}$
\begin{eqnarray}
	\begin{cases}
		0>\beta_{n}-n \geq C\left(\alpha(\ket{H_{n}})-1\right)\\
		\beta_{n}+n \geq C\alpha(\ket{H_{n}})\\
		\beta_{n} \pm (n-2) \geq C\alpha(\ket{H_{n}})\\
		...\\
		\beta_{n} \pm (n-2m) \geq C\alpha(\ket{H_{n}})\\
		...\\
		\beta_{n}\geq C\alpha(\ket{H_{n}})
	\end{cases}
\label{const-neven}
\end{eqnarray}
for $n$ even, $m$ positive integer with $0<m<n/2$ and $C>0$, while
\begin{eqnarray}
	\begin{cases}
	0>\beta_{n}-n \geq C\left(\alpha(\ket{H_{n}})-1\right)\\
		\beta_{n}+n \geq C\alpha(\ket{H_{n}})\\
		\beta_{n} \pm (n-2) \geq C\alpha(\ket{H_{n}})\\
		...\\
		\beta_{n} \pm (n-2m) \geq C\alpha(\ket{H_{n}})\\
		...\\
		\beta_{n} \pm 1 \geq C\alpha(\ket{H_{n}})
	\end{cases}
\label{const-nodd}
\end{eqnarray}
for $n$ odd, $m$ and $C$ as above.\\
These inequalities are all compatible with each other for every $n$ and 
define a convex compatibility region for $\beta_{n}$ and $C$ where one can 
minimize $\beta_{n}$ (see Figure \ref{fig_feasregion} for a pictorial 
representation). The first two inequalities (first line) in the set of 
constraints are identified by the region below the black line and the one on 
the right of the blue one, while the inequality in the third line of the set is
identified by the region on the left of the green line. Notice that if the 
latter is satisfied then also all the remaining constraints in the set are 
automatically satisfied. 
The optimal value for $\beta_{n}$, given by the minimum compatible with the
set of constraints, is at the intersection of the (blue and green) lines
\begin{eqnarray}
	\begin{cases}
	\beta_{n}=n -C\left(1-\alpha(\ket{H_{n}})\right)\\
	\beta_{n}=(n-2)+ C\alpha(\ket{H_{n}})
	\end{cases}
\label{2lines}
\end{eqnarray}
which corresponds to $C=2$ and $\beta_{n}=n-2(1-\alpha(\ket{H_{n}})$.

\begin{figure}[h!]
\centering
	\begin{tikzpicture}[scale=0.8]
		\shade[bottom color=gray,top color=white] (0,4)-- (1,4)--(2/3,10/3);
		\path (0,6) node[] (xO) {\small $\beta_{n}$};
		\path (0,0) node[] (O) {};
		\path (6,0) node[] (yO) {\small $C\cdot\left(1-\alpha(\ket{H_{n}})\right)$};
		\draw[->,thick,black] (O.center)--(xO);
		\draw[->,thick,black] (O.center)--(yO);
		\path (0,2) node[] (n2) {$\bullet$};
		\path (0,4) node[] (n) {$\bullet$};
		\path (6,4) node[] (nline) {};
		\draw[black] (n.center)--(nline);
		\path (-0.8,2) node[] (xlabel1) {\small $n-2$};
		\path (-0.5,4) node[] (xlabel2) {\small $n$};
		\path (4,0) node[] (line1) {};
		\path (2,6) node[] (line2) {};
		\draw[thick,blue,dashed] (n.center)--(line1.center);
		\draw[thick,green,densely dotted] (n2.center)--(line2.center);
		\path (0,2) node[] () {$\bullet$};
		\path (0,4) node[] () {$\bullet$};
		\path (4,6) node[] (label1) {};
		\path (4,2) node[] (label2) {};
		\path[red] (2/3,10/3) node[] (sol) {$\bullet$};
	\end{tikzpicture}
	\caption{Grey region: feasible region. Black line: $\beta_{n}=n$. Blue dashed line: $\beta_{n}=n -C\left(1-\alpha(\ket{H_{n}})\right)$. Green dotted line: $\beta_{n}=(n-2)+ C\alpha(\ket{H_{n}})$. Red dot: optimal value for $\beta_{n}$.}
	\label{fig_feasregion}
\end{figure}
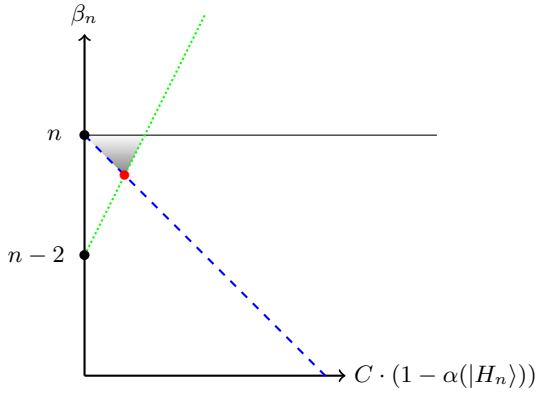

In the following we list the stabilizer entanglement witnesses constructed
in ths way, 
specifying the value of their associated robustness parameter and the number 
of local measurement settings required.\\

A suitable witness for the detection of entanglement in the neighbourhood of 
$\ket{G_{n}}$ is given by

\begin{eqnarray}
	\tilde{W}_{n}=\frac{n2^{n-1}-2}{2^{n-1}}\mathbb{I}-\sum_{i=1}^{n}K_{i}
(G_n)
\label{tildeG}
\end{eqnarray}
requiring exactly $n$ local measurement settings and with 
$\tilde{p}_{n}^{L}=\frac{2}{n2^{n-1}}$. 
Here $K_{i}(G_n)$ denotes the stabilizer operators for hypergraph
state $\ket{G_{n}}$.
Just like in the projective case, 
this is the worst case scenario  regarding the number of measurement 
settings. This means that the number of measurements required by the stabilizer 
witnesses that follow is limited from above by $n$, hence no more by an 
exponential function of the number of qubits but by a linear one.
A comparison between the robustness parameter for the projective witness 
(\ref{W-Gn}) and the stabilizer witness (\ref{tildeG}) is reported in Fig. 
\ref{fig_robustness}. As we can see, stabilizer entanglement witnesses need 
fewer local measurement settings than the projective ones but, as a drawback, 
they are less robust to noise than the projective ones.
\\
Suitable stabilizer witnesses for the detection of entanglement in the 
neighbourhood of $\ket{H_{n}^{n-1}}$ are

\begin{eqnarray}
	\begin{dcases}
		\tilde{W_{4}}=\frac{11+\sqrt{5}}{4}\mathbb{I}
-\sum_{i=1}^{4}K_{i}(H_{4}^{3}),\,\,\,n=4\\
		\tilde{W}_{n}=\frac{n2^{n-1}-2n}{2^{n-1}}\mathbb{I}-
\sum_{i=1}^{n}K_{i}(H_{n}^{n-1}),\,\,\,n\geq 6\,\text{even}\\
		\tilde{W}_{n}=\frac{n2^{n-1}-2n+2}{2^{n-1}}\mathbb{I}
-\sum_{i=1}^{n}K_{i}(H_{n}^{n-1}),\,\,\,n\geq 3\,\text{odd}\,.
	\end{dcases}
\label{tildeWn}
\end{eqnarray}
with $\tilde{p}_{4}^{L}=\frac{5-\sqrt{5}}{16}$, 
$\tilde{p}_{n}^{L}=\frac{2}{2^{n-1}}$ for $n$ even, 
$\tilde{p}_{n}^{L}=\frac{2(n-1)}{n2^{n-1}}$ for $n$ odd.\\
For the detection of entanglement in the neighbourhood of 
$\ket{H_{n}^{n-1,n}}$ we have

\begin{eqnarray}
	\begin{dcases}
		\tilde{W}_{3}=\frac{5}{2}\mathbb{I}
-\sum_{i=1}^{3}K_{i}(H_{3}^{2,3}),\,\,\,n=3\\
		\tilde{W}_{n}=\frac{n2^{n-1}-2n}{2^{n-1}}\mathbb{I}
-\sum_{i=1}^{n}K_{i}(H_{n}^{n-1,n}),\,\,\,n\geq 5\,\text{odd}\\
		\tilde{W}_{n}=\frac{n2^{n-1}-2n+2}{2^{n-1}}\mathbb{I}
-\sum_{i=1}^{n}K_{i}(H_{n}^{n-1,n}),\,\,\,n\geq 4\,\text{even}\,.
	\end{dcases}
\label{tildeWn,n-1}
\end{eqnarray}
with $\tilde{p}_{3}^{L}=\frac{1}{6}$, $\tilde{p}_{n}^{L}=\frac{2(n-1)}
{n2^{n-1}}$ for $n$ even, $\tilde{p}_{n}^{L}=\frac{2}{2^{n-1}}$ for $n$ odd.\\
Suitable stabilizer witnesses for the detection of entanglement in the 
neighbourhood of a generic hypergraph state with hyperedges with
maximum cardinality ${k}_{max}$ are given by

\begin{eqnarray}
	\tilde{W}_{n}=\frac{n2^{k_{max}-1}-2}{2^{k_{max}-1}}
\mathbb{I}-\sum_{i=1}^{n}K_{i}(H_{k_{max}})\,
\label{tildeWgen}
\end{eqnarray}
with $\tilde{p}_{n}^{L}=\frac{2}{n2^{k_{max}-1}}$.
A derivation of bounds on the number of measurement settings required to
measure the expectation values of the projective witnesses versus the 
stabilizer ones is reported in Appendix \ref{appendix_measurement}.

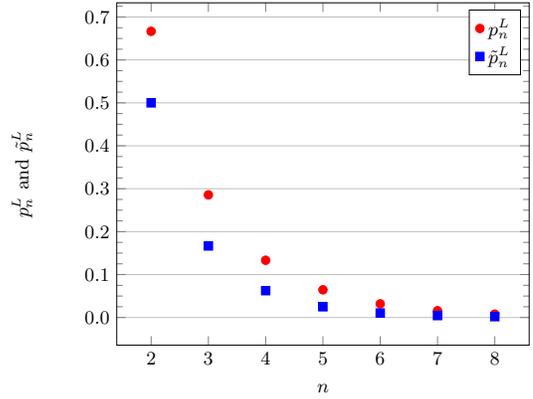
\begin{figure}[h!]
	\centering
	\begin{tikzpicture}[scale=0.8]
		\begin{axis}[xlabel=$n$,ylabel=$p_{n}^{L}$ {\small and} $\tilde{p}_{n}^{L}$,tick label style={/pgf/number format/.cd,fixed, precision=2},ytick={0.0,0.1,0.2,0.3,0.4,0.5,0.6,0.7},ylabel style={at={(-0.05,0.5)}},ymajorgrids,minor y tick num={3},minor grid style={draw=none},
		y tick label style={/pgf/number format/.cd,fixed,fixed zerofill,precision=1}]
			\addplot [only marks,color=red,mark=*] coordinates {
			(2,2/3)
			(3,2/7)
			(4,2/15)
			(5,2/31)
			(6,2/63)
			(7,2/127)
			(8,2/255)};
			\addlegendentry{$p_{n}^{L}$};
			\addplot [only marks,color=blue,mark=square*] coordinates {
			(2,1/2)
			(3,1/6)
			(4,1/16)
			(5,1/40)
			(6,1/96)
			(7,1/224)
			(8,1/512)};
			\addlegendentry{$\tilde{p}_{n}^{L}$};
		\end{axis}
	\end{tikzpicture}
	\caption{Entanglement detection in the neighbourhood of a $n$-qubit hypergraph state $\ket{G_{n}}$ with only one maximum-cardinality hyperedge. Plot of the robustness parameters $p^{L}_{n}$ (red line with dots) and $\tilde{p}^{L}_{n}$ (blue line with squares) versus the number of qubits $n$ for $2\leq n\leq 8$. }
	\label{fig_robustness}
\end{figure}

\section{Conclusions}
\label{Concl}

In this work we have studied the entanglement properties of some classes of
symmetric hypergraph states for an arbitrary number of qubits $n$ by 
proposing an analytical procedure based on the notion of infinity norm
of a matrix. We have also derived lower bounds to the multipartite entanglement
of a generic hypergraph state, that depends on the value of the maximal 
cardinality of the corresponding hypergraph. We have then constructed two 
classes of entanglement witness operators for the detection of multipartite
entanglement in the neighborhood of a hypergraph state and we have compared
their efficiency in terms of minimal number of measurement settings required
in order to measure the  expectation value of the witness operator and in terms
of the corresponding robustness parameter.
The analysis has shown that projector-based witnesses perform better in terms of
robustness parameter with respect to stabilizer witnesses, but in general they 
are more demanding in terms of the number of measurement settings required.

\begin{acknowledgments}
We thank O. G\"uhne and M. Gachechiladze for suggestions to improve a
previous version of this manuscript.\\
DB acknowledges support by BMBF.
\end{acknowledgments}

\appendix
\section{Proofs of the multipartite entanglement formulas}
\label{appendix_multipartite_entanglement}

It suffices here to prove the overlap and multipartite entanglement formulas 
of Theorem \ref{theorem_multipartite_entanglement_n-1}. The proof of Theorem 
\ref{theorem_multipartite_entanglement_n-1n} is based on the same procedure, 
with the only difference being in the role played by the parity of the number 
of qubits and the exception of case $n=3$.
\begin{proof}
\textit{Case $n$ even.} We consider the representation of hypergraph state 
$\ket{H_{n}^{n-1}}$ over the computational basis and note that the number of 
negative signs in front of the computational basis elements is equal to 
$\binom{n}{n-1}\equiv n$, while the coefficient of state $\ket{11...\,1}$ is 
equal to $+1$. The Schmidt decomposition of hypergraph state 
$\ket{H_{n}^{n-1}}$ with respect to bipartition $\bar A=\lbrace 1,2,...\,(n-1) 
\rbrace$ and $\bar B=\lbrace n \rbrace$ may always be written as
\begin{equation}
\label{even_hypergraph_decomposition}
	\begin{split}
		\ket{H_{n}^{n-1}}&=\sqrt{\frac{2^{n-1}-n}{2^{n-1}}}
\sum_{\underset{w(x)<n-2}{x=0,}}^{2^{n-1}-1}\frac{\ket{x}}{\sqrt{2^{n-1}-n}}
\ket{+}\,+\\
		&+\sqrt{\frac{n}{2^{n-1}}}\cdot\frac{1}{\sqrt{n}}
\left(\sum_{\underset{w(x)=n-2}{x=0,}}^{2^{n-1}-1}\ket{x}-\ket{11...\,1} 
\right)\ket{-}
	\end{split}
\end{equation}
where the first summation is taken over all binary numbers from $0$ to 
$2^{n-1}-1$ with weight $w(x)<n-2$ while the second summation is taken over 
those binary numbers $x$ that have weight $w(x)$ exactly equal to $n-2$; 
while the first summation is made up of $2^{n-1}-n$ addends, the second one 
is made up of $n-1$ terms.\\
As a straightforward consequence of representation 
\eqref{even_hypergraph_decomposition}, it follows that the maximum Schmidt 
coefficient of hypergraph state $\ket{H_{n}^{n-1}}$, with respect to 
bipartition $\bar A$ $\bar B$ and for $n\geq 4$, is 
$s^{\bar A \bar B}_{max}(\ket{H_{n}^{n-1}})=\sqrt{\frac{2^{n-1}-n}{2^{n-1}}}$.
\paragraph*{Case $n=4$.} We first consider the case with $n=4$ and write 
hypergraph state $\ket{H_{4}^{3}}$ as
\begin{equation}
	\begin{split}
		\ket{H_{4}^{3}}&=\frac{1}{\sqrt{2}}
\Bigl( \frac{\ket{000}+\ket{001}+\ket{010}+\ket{100}}{2}\ket{+}\,+\\
		&+ \frac{\ket{011}+\ket{101}+\ket{110}-
\ket{111}}{2}\ket{-}\Bigr)\;.
	\end{split}
\end{equation}
The maximum Schmidt coefficient with respect to bipartition 
$A=\lbrace 1,2,3\rbrace$ and $B=\lbrace 4 \rbrace$ is 
$s^{AB}_{max}(\ket{H_{4}^{3}}) = \frac{1}{\sqrt{2}}$, hence the maximum 
eigenvalue of the corresponding reduced density matrix $\rho^{(123)}$ is 
$\lambda_{max}(\rho^{(123)})=\frac{1}{2}$.\\
Taking then into consideration bipartition $A = \lbrace 1,2 \rbrace$ and 
$B = \lbrace 3,4 \rbrace$, we write hypergraph state $\ket{H_{4}^{3}}$ as

\begin{eqnarray}
	\begin{split}
		\ket{H_{4}^{3}}&=\frac{\ket{0}\ket{G_{3}}
+\ket{1}C_{3}^{(2,3,4)}\ket{H_{3}^{2}}}{\sqrt{2}}\\
		&=\frac{1}{2}\left(\ket{00}\ket{+}^{\otimes 2}
+(\ket{01}+\ket{10})\ket{G_{2}}+\ket{11}\ket{H_{2}^{1}}  \right)\,.
	\end{split}
\end{eqnarray}
If we now partially trace over both systems $3$ and $4$, the resulting 
density matrix is given by
\begin{eqnarray}
	\begin{split}
		\rho^{(12)}&=\frac{1}{16}
		\left( \begin{matrix}
			4 & 2 & 2 & 0 \\
			2 & 4 & 4 & -2\\
			2 & 4 & 4 & -2\\
			0 & -2 & -2 & 4
		\end{matrix} \right)
	\end{split}
\end{eqnarray}
hence $\lambda_{max}(\rho^{(12)})\leq\parallel \hspace{-0.1cm}
\rho^{(12)}\hspace{-0.1cm}\parallel_{\infty}=\frac{3}{4}$. 
We conclude that $\alpha (\ket{H_{4}^{3}})\leq \frac{3}{4}$.\\
If we explicitly compute the eigenvalues of the matrix $\rho^{(12)}$ it
turns out that its maximum eigenvalue is $\lambda_{max}(\rho^{(12)})
=\frac{3+\sqrt{5}}{8}\leq\frac{3}{4}$, as expected.
\paragraph*{Case $n\geq 6$.} Let $\ket{H_{n}^{n-1}}$ be a $n$-qubit 
hypergraph state with $n$ even, $n\geq 6$, and with all possible 
$(n-1)$-hyperedges. Besides the representation 
\eqref{even_hypergraph_decomposition}, the hypergraph state 
$\ket{H_{n}^{n-1}}$ may be written in the following ways
\begin{equation}
	\begin{split}
		\ket{H_{n}^{n-1}}&=
\frac{\ket{0}\ket{G_{n-1}}+\ket{1}\ket{\tilde{H}_{n-1}^{n-2}}}{\sqrt{2}}\\
		&=\frac{\ket{00}\ket{+}^{\otimes n-2}+(\ket{01}+\ket{10})
\ket{G_{n-2}}+\ket{11}\ket{\tilde{H}_{n-2}^{n-3}}}{2}\\
		&=\frac{1}{2\sqrt{2}}\Bigl( (\ket{000}+\ket{001}+\ket{010}
+\ket{100})\ket{+}^{\otimes n-3}+\\
		&+(\ket{011}+\ket{101}+\ket{110})\ket{G_{n-3}}+\ket{111}
\ket{\tilde{H}_{n-3}^{n-4}} \Bigr)\\
		&=\frac{1}{\sqrt{2^{k}}}\Biggl(\sum_{\underset{w(x)<k-1}
{x=0,}}^{2^{k}-1}\ket{x}\ket{+}^{\otimes n-k}+\\
		&+\sum_{\underset{w(x)=k-1}{x=0,}}^{2^{k}-1}\ket{x}
\ket{G_{n-k}}+\ket{\overbrace{11...\,1}^{k}}\ket{\tilde{H}_{n-k}^{n-k-1}}
\Biggr)
	\end{split}
\end{equation}
for all $1\leq k \leq n$, where $\ket{\tilde{H}_{n-k}^{n-k-1}}$ is the 
$(n-k)$-qubit hypergraph state with all possible $(n-k-1)$-hyperedges plus an 
additional $n$-hyperedge if $n-k$ is odd, so that the coefficient in front 
of $\ket{11...\,1}$ remains equal to $+1$ for every $k$. While the first 
summation is made up of $2^{k}-k-1$ terms, the second summation is made up 
of $k$ addends. As a consequence, we may express the corresponding reduced 
density matrix $\rho^{(n-k)}$ as
\begin{equation}
	\rho^{(n-k)}=\frac{1}{\mathcal{N}}(2^{k}-k-1)\mathcal{I}_{2^{n-k}}
+\frac{1}{\mathcal{N}}k\tilde{\mathcal{I}}_{2^{n-k}}+\frac{1}{\mathcal{N}}
\mathcal{R}_{2^{n-k}}
\end{equation}
where $\mathcal{I}_{2^{n-k}}\propto (\ket{+}\bra{+})^{\otimes n-k}$ is the 
$2^{n-k}\times 2^{n-k}$ matrix filled with ones, 
$\tilde{\mathcal{I}}_{2^{n-k}}\propto\ket{G_{n-k}}\bra{G_{n-k}}$ is the 
$2^{n-k}\times 2^{n-k}$ matrix filled with ones except for the last column 
and for the last row, that have all elements equal to $-1$ and a final $1$, 
i.e.
\begin{displaymath}
	\tilde{\mathcal{I}}_{2^{n-k}}=\left( \begin{matrix}
		1 & 1 & \dots & 1 & -1 \\
		1 & 1 & \dots & 1 & -1 \\
		\vdots & \vdots & \ddots & \vdots & \vdots \\
		1 & 1 & \dots & 1 & -1 \\
		-1 & -1 & \dots & -1 & 1
	\end{matrix} \right)
\end{displaymath}
$\mathcal{R}_{2^{n-k}}\propto \ket{\tilde{H}_{n-k}^{n-k-1}}
\bra{\tilde{H}_{n-k}^{n-k-1}}$ and $\mathcal{N}$ is a normalization factor. 
The matrix $\mathcal{R}_{2^{n-k}}$ has two kinds of rows: the first one has 
$n-k$ elements equal to $-1$ and $2^{n-k}-n+k$ elements equal to $1$, in 
particular the last element is equal to $1$; the second one has the same 
elements with opposite sign. The sum of these three matrices gives a matrix 
whose elements may assume the following $4$ possible values
\begin{displaymath}
	\begin{split}
		&v_{1}=2^{k}-k-1+k+1=2^{k}\\
		&v_{2}=2^{k}-k-1+k-1=2^{k}-2\\
		&v_{3}=2^{k}-k-1-k+1=2^{k}-2k\\
		&v_{4}=2^{k}-k-1-k-1=2^{k}-2k-2
	\end{split}
\end{displaymath}
where $v_{1}$ and $v_{2}$ are internal values while $v_{3}$ and $v_{4}$ are 
values that may be assumed by elements of the last column and row. Values 
$v_{i}$ for $i=1,2,3,4$ are all non-negative except for $v_{4}=-2$ when 
$k=2$.\\
We can now evaluate $\parallel \rho^{n-k} \parallel_{\infty}$ as
\begin{equation}
	\parallel \rho^{n-k} \parallel_{\infty}=\max \lbrace 
\mathcal{N}^{1}_{\infty},\mathcal{N}^{2}_{\infty}\rbrace
\end{equation}
where $\mathcal{N}^{1}_{\infty}$ gathers contributions coming from the first 
row while $\mathcal{N}^{2}_{\infty}$ has contributions coming from the other 
possible kind of row.\\
We will first consider $k=2$. In this case it turns out 
that $v_{1}=4$, $v_{2}=2$, $v_{3}=0$ and $v_{4}=-2$. This means that 
$\mathcal{N}^{1}_{\infty}$ and $\mathcal{N}^{2}_{\infty}$ take the following 
values
\begin{displaymath}
	\begin{split}
		\mathcal{N}^{1}_{\infty}
		\overset{k=2}{=}\frac{2^{n-1}-n}{2^{n-1}}
	\end{split}
\end{displaymath}
\begin{displaymath}
	\begin{split}
		\mathcal{N}^{2}_{\infty}
		\overset{k=2}{=}\frac{2^{n-1}-n}{2^{n-1}}
-\left( \frac{2^{n-2}-2n+2}{2^{n-1}}\right)
	\end{split}
\end{displaymath}
where $\frac{2^{n-2}-2n+2}{2^{n-1}}\geq 0$ for $n\geq 5$. This means that, 
when $k=2$, $\mathcal{N}^{1}_{\infty}\geq \mathcal{N}^{2}_{\infty}$ for 
every $n\geq 6$, with the only exception of the case $n=4$ that has already 
been examined above. Actually when $n=4$ and $k=2$, 
we have that $\frac{2^{n-1}-n}{2^{n-1}}=\frac{1}{2}$ and 
$\frac{2^{n-2}-2n+2}{2^{n-1}}=-\frac{1}{4}$, so the sum of their absolute 
values is $\frac{3}{4}$ as expected.\\
Consider now $k\geq 3$. In this case we have that 
$\mathcal{N}^{1}_{\infty}$ and $\mathcal{N}^{2}_{\infty}$ may be written as
\begin{displaymath}
	\begin{split}
		\mathcal{N}^{1}_{\infty}&=\frac{v_{1}(2^{n-k}-n+k-1)+v_{2}
(n-k)+v_{3}}{\mathcal{N}}\\
		&=\frac{2^{n-1}-n}{2^{n-1}}
	\end{split}
\end{displaymath}
\begin{displaymath}
	\begin{split}
		\mathcal{N}^{2}_{\infty}&=\frac{v_{2}(2^{n-k}-n+k-1)
+v_{1}(n-k)+v_{4}}{\mathcal{N}}\\
		&=\frac{2^{n-1}-n}{2^{n-1}}-\left( \frac{2^{n-k}-2(n-k)}
{2^{n-1}} \right)
	\end{split}
\end{displaymath}

where $\frac{2^{n-k}-2(n-k)}{2^{n-1}}\geq 0$ for $n-k\geq 2$. 
Since we need to consider only inequivalent bipartitions, that means 
$k\leq \frac{n}{2}$, the cases corresponding to $n\geq 6$ and 
$3\leq k\leq \frac{n}{2}$ satisfy this requirement. Hence we have that 
$\mathcal{N}^{1}_{\infty}\geq \mathcal{N}^{2}_{\infty}$ for all values of 
$n\geq 6$ and $3\leq k\leq \frac{n}{2}$.\\
We conclude by observing that the value of $\mathcal{N}^{1}_{\infty}$ is 
independent of $k$ and equal to the squared maximum Schmidt coefficient 
$s_{max}^{\bar A \bar B}(\ket{H^{n}_{n-1}})^{2}$, evaluated with respect to 
bipartition $\bar A \bar B$.\\

\textit{Case $n$ odd.} We consider the representation of hypergraph state 
$\ket{H_{n}^{n-1}}$ over the computational basis and note that the number of 
negative signs in front of the computational basis elements is equal to 
$\binom{n}{n-1}+1\equiv n+1$, in particular the coefficient of state 
$\ket{11...\,1}$ is equal to $-1$. The Schmidt decomposition of the 
hypergraph state $\ket{H_{n}^{n-1}}$ with respect to bipartition 
$\bar A \bar B$ may always be written as
\begin{equation}
\label{odd_hypergraph_decomposition}
	\begin{split}
		\ket{H_{n}^{n-1}}&=\sqrt{\frac{2^{n-1}-n+1}{2^{n-1}}}\cdot 
\frac{1}{\sqrt{2^{n-1}-n+1}}\cdot\\
		&\cdot\left(\sum_{\underset{w(x)<n-2}{x=0,}}^{2^{n-1}-1}
\ket{x}-\ket{11...\,1}\right)\ket{+}\,+\\
		&+\sqrt{\frac{n-1}{2^{n-1}}}\sum_{\underset{w(x)=n-2}
{x=0,}}^{2^{n-1}-1}\frac{\ket{x}}{\sqrt{n-1}}\ket{-}
	\end{split}
\end{equation}
where the first summation is taken over all binary numbers ranging from $0$ 
to $2^{n-1}-1$ with weight $w(x)<n-2$ while the second summation is taken 
over those binary numbers $x$ that have weight $w(x)$ exactly equal to $n-2$; 
while the first summation is made up of $2^{n-1}-n$ addends, the second one 
is made up of $n-1$ terms.\\
As a straightforward consequence of representation 
\eqref{odd_hypergraph_decomposition}, it follows that the maximum Schmidt 
coefficient of hypergraph state $\ket{H_{n}^{n-1}}$, with respect to 
bipartition $\bar A \bar B$, is 
$s^{\bar A \bar B}_{max}(\ket{H_{n}^{n-1}})
=\sqrt{\frac{2^{n-1}-n+1}{2^{n-1}}}$.
\paragraph*{Case $n=3$.} We first consider case $n=3$ and write the 
hypergraph state $\ket{H_{3}^{2}}$ as
\begin{equation}
\ket{H_{3}^{2}}=\frac{1}{\sqrt{2}}\frac{\ket{00}-\ket{11}}{\sqrt{2}}
\ket{+}+\frac{1}{\sqrt{2}}\frac{\ket{01}+\ket{10}}{\sqrt{2}}\ket{-}\,.
\end{equation}
The maximum Schmidt coefficient with respect to bipartition 
$A=\lbrace 1,2\rbrace$ and $B=\lbrace 3 \rbrace$ is 
$s^{AB}_{max}(\ket{H_{3}^{2}})=\frac{1}{\sqrt{2}}$, hence the maximum 
eigenvalue of the corresponding reduced density matrix $\rho^{(12)}$ is 
$\lambda_{max}(\rho^{(12)})=\frac{1}{2}$ and the maximum overlap is 
$\alpha(\ket{H_{3}^{2}})=\frac{1}{2}$. We do not need to consider any other 
bipartitions since, given the invariance of the state under permutations of 
the qubits, they are all equivalent to this one.
\paragraph*{Case $n\geq 5$.} Let $\ket{H_{n}^{n-1}}$ be a $n$-qubit 
hypergraph state with $n$ odd, $n\geq 5$, and with all possible 
$(n-1)$-hyperedges. Besides the representation 
\eqref{odd_hypergraph_decomposition}, the hypergraph state 
$\ket{H_{n}^{n-1}}$ may be written as
\begin{equation}
	\begin{split}
\ket{H_{n}^{n-1}}&=\frac{\ket{0}\ket{G_{n-1}}+\ket{1}
\ket{\hat{H}_{n-1}^{n-2}}}{\sqrt{2}}\\
&=\frac{\ket{00}\ket{+}^{\otimes n-2}+(\ket{01}+\ket{10})
\ket{G_{n-2}}+\ket{11}\ket{\hat{H}_{n-2}^{n-3}}}{2}\\
&=\frac{1}{2\sqrt{2}}\Bigl( (\ket{000}+\ket{001}+\ket{010}
+\ket{100})\ket{+}^{\otimes n-3}+\\
&+(\ket{011}+\ket{101}+\ket{110})\ket{G_{n-3}}+\ket{111}
\ket{\hat{H}_{n-3}^{n-4}} \Bigr)\\
&=\frac{1}{\sqrt{2^{k}}}\Biggl(\sum_{\underset{w(x)<k-1}
{x=0,}}^{2^{k}-1}\ket{x}\ket{+}^{\otimes n-k}+\sum_{\underset{w(x)=k-1}
{x=0,}}^{2^{k}-1}\ket{x}\ket{G_{n-k}}+\\
		&+\ket{\overbrace{11...\,1}^{k}}\ket{\hat{H}_{n-k}^{n-k-1}}
\Biggr)
	\end{split}
\end{equation}
for all $1\leq k \leq n$, where $\ket{\hat{H}_{n-k}^{n-k-1}}$ is the 
$(n-k)$-qubit hypergraph state with all possible $(n-k-1)$-hyperedges plus 
an additional $n$-hyperedge if $n-k$ is even, so that the coefficient in 
front of $\ket{11...\,1}$ remains equal to $-1$ for every $k$. While the 
first summation is made up of $2^{k}-k-1$ terms, the second summation is 
made up of $k$ addends. As a consequence, we may represent the corresponding 
reduced density matrix $\rho^{(n-k)}$ as
\begin{equation}
	\rho^{(n-k)}=\frac{1}{\mathcal{N}}(2^{k}-k-1)\mathcal{I}_{2^{n-k}}
+\frac{1}{\mathcal{N}}k\tilde{\mathcal{I}}_{2^{n-k}}+\frac{1}{\mathcal{N}}
\mathcal{R}_{2^{n-k}}
\end{equation}
where $\mathcal{R}_{2^{n-k}}\propto \ket{\hat{H}_{n-k}^{n-k-1}}
\bra{\hat{H}_{n-k}^{n-k-1}}$ and $\mathcal{N}$ is a normalization factor. The
matrix $\mathcal{R}_{2^{n-k}}$ has two kinds of rows: the first one has 
$n-k+1$ elements equal to $-1$ and $2^{n-k}-n+k-1$ elements equal to $1$, 
in particular the last element is equal to $-1$; the second one has the same 
elements with opposite sign. The sum of these three matrices gives a matrix 
whose elements may assume the following $4$ possible values
\begin{displaymath}
	\begin{split}
		&v_{1}=2^{k}-k-1+k+1=2^{k}\\
		&v_{2}=2^{k}-k-1+k-1=2^{k}-2\\
		&v_{3}=2^{k}-k-1-k-1=2^{k}-2k-2\\
		&v_{4}=2^{k}-k-1-k+1=2^{k}-2k
	\end{split}
\end{displaymath}
where $v_{1}$ and $v_{2}$ are internal values while $v_{3}$ and $v_{4}$ are 
values that may be assumed by elements of the last column and row. 
Values $v_{i}$ for $i=1,2,3,4$ are all non-negative except for $v_{3}=-2$ 
when $k=2$.\\
We can now evaluate $\parallel \hspace{-1mm}\rho^{n-k} 
\hspace{-1mm}\parallel_{\infty}$ as
\begin{equation}
	\parallel \hspace{-1mm}\rho^{n-k} \hspace{-1mm}\parallel_{\infty}=
\max \lbrace \mathcal{N}^{1}_{\infty},\mathcal{N}^{2}_{\infty}\rbrace
\end{equation}
where $\mathcal{N}^{1}_{\infty}$ gathers contributions coming from the first 
row while $\mathcal{N}^{2}_{\infty}$ has contributions coming from the other 
possible kind of row.\\
We consider first $k=2$. In this case it turns out 
that $v_{1}=4$, $v_{2}=2$, $v_{3}=-2$ and $v_{4}=0$. This means that 
$\mathcal{N}^{1}_{\infty}$ and $\mathcal{N}^{2}_{\infty}$ take the following 
values
\begin{displaymath}
	\begin{split}
		\mathcal{N}^{1}_{\infty}
		\overset{k=2}{=}\frac{2^{n-1}-n+1}{2^{n-1}}
	\end{split}
\end{displaymath}
\begin{displaymath}
	\begin{split}
		\mathcal{N}^{2}_{\infty}
		\overset{k=2}{=}\frac{2^{n-1}-n+1}{2^{n-1}} - 
\left(\frac{2^{n-3}-n+2}{2^{n-2}} \right)
	\end{split}
\end{displaymath}
where $\frac{2^{n-3}-n+2}{2^{n-2}} \geq 0$ for $n\geq 3$. This means that, 
when $k=2$, we have that $\mathcal{N}^{1}_{\infty}\geq 
\mathcal{N}^{2}_{\infty}$ for every $n\geq 5$.\\
Consider now $k\geq 3$. In this case we need to consider 
values of $n$ such that $n\geq 7$. We then find that 
$\mathcal{N}^{1}_{\infty}$ and $\mathcal{N}^{2}_{\infty}$ may be written as
\begin{displaymath}
	\begin{split}
		\mathcal{N}^{1}_{\infty}&=\frac{v_{1}(2^{n-k}-n+k-1)
+v_{2}(n-k)+v_{3}}{\mathcal{N}}\\
		&=\frac{2^{n-1}-n-1}{2^{n-1}}\\
		&\leq \frac{2^{n-1}-n+1}{2^{n-1}}
	\end{split}
\end{displaymath}
\begin{displaymath}
	\begin{split}
		\mathcal{N}^{2}_{\infty}&=\frac{v_{2}(2^{n-k}-n+k-1)
+v_{1}(n-k)+v_{4}}{\mathcal{N}}\\
		&=\frac{2^{n-1}-n-1}{2^{n-1}}-\left( 
\frac{2^{n-k}-2(n-k)-2}{2^{n-1}} \right)
	\end{split}
\end{displaymath}
where $\frac{2^{n-k}-2(n-k)}{2^{n-1}}\geq 0$ for $n-k\geq 3$. Since we need 
to consider only inequivalent bipartitions, that means $k\leq \frac{n-1}{2}$, 
the cases corresponding to $n\geq 7$ and $3\leq k\leq \frac{n-1}{2}$ satisfy 
this requirement. Hence we have that $\mathcal{N}^{1}_{\infty}
\geq \mathcal{N}^{2}_{\infty}$ also for all values of $n\geq 7$ and 
$3\leq k\leq \frac{n-1}{2}$.\\
We conclude by observing that the value of $\mathcal{N}^{1}_{\infty}$ is 
independent of $k$ and it is lower than the squared maximum Schmidt 
coefficient $s_{max}^{\bar A \bar B}(\ket{H_{n}^{n-1}})^{2}$, evaluated with 
respect to bipartition $\bar A\bar B$.
\end{proof}


\section{Measurement of the witnesses}
\label{appendix_measurement}
\paragraph{Single maximum-cardinality hyperedge case} In order to measure each 
stabilizer operator, one local measurement setting is required and they are 
all different from each other. To be more precise, stabilizer $K_{i}$ 
is given by the tensor product of the Pauli matrix $\pauli{x}^{(i)}$ and 
the control gate $C_{n-1}^{(1,2,...\hat{i},...\,n)}$, whose decomposition 
over the Pauli basis has all possible tensor products of Pauli matrices 
$\pauli{z}^{(j)}$ and identities $\mathbb{I}^{(k)}$ for 
$j,k=1,2,...\hat{i},...\,n$. Consequently, the measurement setting required 
to measure the expectation value of $K_{i}$ is composed of $n-1$ local 
measurements of type $Z$ and one measurement of kind $X$ to be performed on 
qubit $i$.\\

Consider now compositions of pairs of stabilizers, i.e. $K_{i}K_{j}$. 
Pauli matrices of type $\pauli{y}$ appear in positions $i$ and $j$ because 
of terms $\pauli{x}^{(i)}\pauli{z}^{(i)}=-i\pauli{y}^{(i)}$ and 
$\pauli{z}^{(j)}\pauli{x}^{(j)}=i\pauli{y}^{(j)}$. Given the hermiticity of 
the stabilizer operators and of their compositions, among all possible 
arising new terms, i.e. $\pauli{y}^{(i)}\otimes \pauli{y}^{(j)}$, 
$\pauli{x}^{(i)}\otimes \pauli{x}^{(j)}$, 
$i\pauli{x}^{(i)}\otimes \pauli{y}^{(j)}$ and 
$i\pauli{y}^{(i)}\otimes \pauli{x}^{(j)}$, only the first two appear. 
We notice that, due to the hermiticity requirement, only an even number of 
local operators of kind $Y$ appear. It follows that, in order to measure 
each composition of pair of stabilizers, 
two measurement settings are required: one with measurements of 
kind $X$ and one with measurements of kind $Y$  to be performed on qubits 
$i$ and $j$.\\

We consider now the composition of the pair $K_{i}K_{j}$ with stabilizer 
$K_{k}$. We have to consider first the new terms arising from the composition 
of $\pauli{x}^{(i)}\otimes \pauli{x}^{(j)}\otimes \pauli{z}^{(k)}$ and 
$\pauli{x}^{(i)} \otimes \pauli{x}^{(j)} \otimes \mathbb{I}^{(k)}$ with 
$\pauli{z}^{(i)}\otimes \pauli{z}^{(j)} \otimes \pauli{x}^{(k)}$, 
$\mathbb{I}^{(i)}\otimes \mathbb{I}^{(j)}\otimes \pauli{x}^{(k)}$, 
$\pauli{z}^{(i)}\otimes \mathbb{I}^{(j)}\otimes \pauli{x}^{(k)}$ and 
$\mathbb{I}^{(i)}\otimes \pauli{z}^{(j)} \otimes \pauli{x}^{(k)}$; the new 
terms are $8$ but, because of the hermiticity requirement, only half of them 
appear. We then consider those terms arising from the composition of 
$\pauli{y}^{(i)}\otimes \pauli{y}^{(j)}\otimes \pauli{z}^{(k)}$ and 
$\pauli{y}^{(i)} \otimes \pauli{y}^{(j)} \otimes \mathbb{I}^{(k)}$ with 
$\pauli{z}^{(i)}\otimes \pauli{z}^{(j)} \otimes \pauli{x}^{(k)}$, 
$\mathbb{I}^{(i)}\otimes \mathbb{I}^{(j)}\otimes \pauli{x}^{(k)}$, 
$\pauli{z}^{(i)}\otimes \mathbb{I}^{(j)}\otimes \pauli{x}^{(k)}$ and 
$\mathbb{I}^{(i)}\otimes \pauli{z}^{(j)} \otimes \pauli{x}^{(k)}$; they 
are $8$ but, because of the hermiticity requirement, only half of them appear.
The $4$ terms that survive in the first round are just the same as those 
that survive in the second; actually, the four terms generated in each round 
alone exhaust all admissible terms. In order to measure the operator
$K_{i}K_{j}K_{k}$, we conclude that $4$ local measurement settings are 
therefore required. In fact, due to the hermiticity requirement, only an even 
number of local measurements of kind $Y$ needs to be performed: the number 
of ways in which we can choose an even set of qubits among $3$, onto which 
perform local measurements of kind $Y$, is indeed exactly equal to $4$.\\

This reasoning may be extended to any compositions of stabilizer operators 
of the form $\prod_{i=1}^{k}K_{i}$ with $2\leq k\leq n$. Because of the 
hermiticity requirement, only an even number of local operators of kind $Y$ 
appears: the number of ways in which we can choose an even set of qubits 
among $k$, onto which perform local measurements of kind $Y$, is equal to 
$n_{k}$ where
\begin{displaymath}
	n_{k}=\sum_{k'=0,\,\text{even}}^{k} \binom{k}{k'}\equiv 2^{k-1}\,.
\end{displaymath}
It follows that $2^{k-1}$ is the number of local measurement settings 
required to measure the expectation value of the product of $k$ 
stabilizer \cite{fn}.
Moreover, regarding the number of local measurement settings required, 
the sole maximum-cardinality hyperedge case is the most demanding: the local 
decomposition of the control gate $C_{n-1}^{(1,2,...\hat{i},...\,n)}$ has all 
possible tensor products of Pauli matrices $\pauli{z}^{(j)}$ and identities 
$\mathbb{I}^{(k)}$ for $j,k=1,2,...\hat{i},...\,n$. This happens to the 
stabilizer operators of any hypergraph states endowed with a 
maximum-cardinality 
hyperedge. Any other $k$-qubit control gate with $k< n-1$ has in its 
decomposition a subset of the terms appearing in the representation of 
control gate $C_{n-1}^{(1,2,...\hat{i},...\,n)}$.\\

We want to point out that the vanishing of anti-hermitian terms and the 
survival of the hermitian ones may be explained also in the following way. 
Consider the composition of stabilizers $K_{i}$ and $K_{j}$, then an odd 
number of $\pauli{y}$ matrices may appear as a consequence either of 
composition $\pauli{x}^{(i)}\pauli{z}^{(i)}=-i\pauli{y}^{(i)}$ in position 
$i$ or of composition $\pauli{z}^{(j)}\pauli{x}^{(j)}=i\pauli{y}^{(j)}$ in 
position $j$. Because of the hermiticity requirement, terms 
$i\pauli{y}^{(i)}$ and $-i\pauli{y}^{(j)}$ should appear too. Terms with an 
even number of $\pauli{y}$ matrices arise from the following tensor products: 
$i\pauli{y}^{(i)} \otimes (-i)\pauli{y}^{(j)}=\pauli{y}^{(i)} \otimes 
\pauli{y}^{(j)}$ and $-i\pauli{y}^{(i)} \otimes i\pauli{y}^{(j)}=
\pauli{y}^{(i)} \otimes \pauli{y}^{(j)}$. This means that in the even case, 
despite the single $\pauli{y}$ matrices having opposite signs, their tensor 
product results in having the same sign.\\

As regards the projective witness $W_{n}$, we express the projector 
$\ket{G_{n}}\bra{G_{n}}$ as the sum of the identity plus $n$ other 
different contributions, each contribution made up of terms of the form 
$\prod_{i=1}^{j}K_{i}$ for $j=1,2,...\, n$, i.e.
\begin{displaymath}
	\begin{split}
		W_{n}&=\frac{2^{n-1}-1}{2^{n-1}}\mathbb{I}-\ket{G_{n}}
\bra{G_{n}}\\
		&=\frac{2^{n-1}-1}{2^{n-1}}\mathbb{I}-\prod_{i=1}^{n}
\frac{K_{i}+\mathbb{I}}{2}\\
		&=\frac{2^{n-1}-1}{2^{n-1}}\mathbb{I}+\\
		&-\frac{1}{2^{n}}\left(\mathbb{I}+\sum_{i=1}^{n}K_{i}
+\sum_{i< j=1}^{n}K_{i}K_{j}+...\,\prod_{i=1}^{n}K_{i} \right) \,.
	\end{split}
\end{displaymath}
Then the number of local measurement settings required by the witness 
$W_{n}$ is $\sum_{k=1}^{n}2^{k-1}\binom{n}{k}=\frac{3^{n}-1}{2}$. 
The witness $\tilde{W}_{n}$ instead, since it can be written as a sum of 
single stabilizers, needs exactly $n$ local measurement settings in order 
to be measured.\\

We conclude that, on the one hand, witness $\tilde{W}_{n}$ may always be 
measured efficiently: the number of local measurement settings required 
grows linearly with $n$. On the other hand, the number of local measurement 
settings required by the projective witness $W_{n}$, is not only strictly 
greater than the number required by the corresponding stabilizer witness 
$\tilde{W}_{n}$, but, in the worst case scenario, grows exponentially with 
the number of qubits $n$.

\paragraph{General case} In order to measure each stabilizer operator, one 
local measurement setting is needed but it may happen that some stabilizers 
require the same local measurement setting. Let for instance $K_{i}$ and 
$K_{j}$ be the stabilizer operators associated to a pair of qubits $i$ and 
$j$ not directly connected to each other. Then the local representation of 
the stabilizer $K_{i}$ over the Pauli basis does not contain any 
$\pauli{z}^{(j)}$ term in the $j$-th position but only identities 
$\mathbb{I}^{(j)}$ and the local representation of stabilizer $K_{j}$ does 
not contain any $\pauli{z}^{(i)}$ term in the $i$-th position but only 
identities $\mathbb{I}^{(i)}$. One local measurement setting, composed of two 
measurements of kind $X$ to be performed on qubits $i$ and $j$ and 
measurements of type $Z$ to be performed on the remaining qubits, is then 
enough to measure the expectation values of the two stabilizers.\\
When compositions of two or more stabilizers are concerned, Pauli matrices 
of kind $\pauli{y}$ appear and local measures of kind $Y$ are required. 
The connectedness hypothesis grants, for each stabilizer $K_{i}$, the 
existence of at least one stabilizer $K_{j}$, such that their composition 
$K_{i}K_{j}$ generates Pauli matrices of kind $\pauli{y}$ in positions $i$ 
and $j$.\\
We conclude that, on the one hand, witness $\tilde{W}_{n}$ needs a number of 
local measurement settings lower or equal to $n$. On the other hand, the 
number of local measurement settings required by witness $W_{n}$ is lower or 
equal to $\sum_{k=1}^{n}2^{k-1}\binom{n}{k}=\frac{3^{n}-1}{2}$ but strictly 
greater than the number of local measurement settings required by the 
witness $\tilde{W}_{n}$.


\begin{thebibliography}{99}
\bibitem{jl} R. Jozsa and N. Linden, Proc. R. Soc. Lond. A {\bf 459}, 
2011 (2003).
\bibitem{nest} M. Van den Nest, preprint quant-ph/1204.3107 (2012).
\bibitem{ent-algo}D. Bru\ss\ and C. Macchiavello, Phys. Rev. A {\bf 83}, 052313 (2011).
\bibitem{qu2013} R. Qu, J. Wang, Z. Li  and Y. Bao, Phys. Rev. A {\bf 87},
022311 (2013).
\bibitem{rossi2013} M. Rossi, M. Huber, D. Bru{\ss} and C. Macchiavello, New J.
Phys. {\bf 15}, 113022 (2013).
\bibitem{graph} M. Hein, J. Eisert and H.-J. Briegel, Phys. Rev. A {\bf 69},
062311 (2004).
\bibitem{barbara} C. Kruszynska and B. Kraus, Phys. Rev. A \textbf{79}, 052304 
(2009).
\bibitem{rossi2014} M. Rossi, D. Bru{\ss} and C. Macchiavello, Phys. Scr.
{\bf T160}, 014036 (2014).
\bibitem{ottfried} M. Gachechiladze, C. Budroni and O. G{\"u}hne, 
Phys. Rev. Lett. {\bf 106}, 070401 (2016).
\bibitem{secret} M. Hillery, V. Buzek and A. Berthiaume, Phys. Rev. A \textbf{59}, 1829
(1999).
\bibitem{multi-qkd} M. Epping, H. Kampermann, C. Macchiavello and D. Bru\ss ,
arXiv:1612.05585.
\bibitem{dense} 
D. Bru\ss , G. M. D'Ariano, M. Lewenstein, C. Macchiavello, A. Sen De,
and  U. Sen,  Phys. Rev. Lett. \textbf{93}, 210501 (2004).
\bibitem{hein2006} M. Hein, W. D{\"u}r, J. Eisert, R. Raussendorf, M. Nest 
and H.J. Briegel, arXiv preprint quant-ph/0602096.
\bibitem{guhne2014} O. G{\"u}hne, M. Cuquet, F.E.S. Steinhoff, T. 
Moroder, M. Rossi, D. Bru{\ss}, B. Kraus and C. Macchiavello, J.
Phys. A {\bf 47}, 335303 (2014).
\bibitem{guhne2009} O. G{\"u}hne and G. T{\'o}th, Phys. Rep. {\bf 474}, 1
(2009).
\bibitem{horodecki2009} R. Horodecki, P. Horodecki, M. Horodecki and 
K. Horodecki, Rev. Mod. Phys. {\bf 81}, 865 (2009).
\bibitem{bourennane2004} M. Bourennane, M. Eibl, C. Kurtsiefer, S. Gaertner, 
H. Weinfurter, O. G{\"u}hne, P. Hyllus, D. Bru{\ss}, M. Lewenstein and 
A. Sanpera, Phys. Rev. Lett. {\bf 92}, 087902 (2004).
\bibitem{horodecki1996} M. Horodecki, P. Horodecki, Pawe{\l} and R. Horodecki,
Phys. Lett. A {\bf 223}, 1 (1996).
\bibitem{horn2012} R.A. Horn and C.H. Johnson, {\it Matrix Analysis},
Cambridge University Press (2012).
\bibitem{grover} L.K. Grover, in \textit{Proceeding of the 28th Annual 
Symposium on the Theory of Computing} (ACM Press, New York, 1996), pp. 
212-219; L.K. Grover, Phys. Rev. Lett. \textbf{79}, 325 (1997).
\bibitem{rossi_bruss2013} M. Rossi, D. Bru{\ss} and C. Macchiavello, 
Phys. Rev. A {\bf 87}, 022331 (2013).
\bibitem{nielsen1999} M.A. Nielsen, Phys. Rev. Lett.  {\bf 83}, 436 (1999).
\bibitem{ourwit}
O. G\"uhne, P. Hyllus, D. Bru\ss, A.~Ekert,
M. Lewenstein, C.~Macchiavello and A. Sanpera,
Phys. Rev. A {\bf 66}, 062305 (2002).
\bibitem{toth2005} G. T{\'o}th and O. G{\"u}hne, Phys. Rev. Lett.
{\bf 94}, 060501 (2005).
\bibitem{toth20052} G. T{\'o}th and O. G{\"u}hne, Phys. Rev. A {\bf 72},
022340 (2005).
\bibitem{fn}{The sum of the binomial coefficient $\binom{k}{k'}$ 
over $k'$ even is equal to the sum over $k'$ odd ($0\leq k' \leq k$), which 
is equal to half of the total sum. This equality is a direct consequence of 
the following identity:\begin{displaymath}\sum_{k'=0}^{k}(-1)^{k'}
\binom{k}{k'}\equiv 0\,.\end{displaymath}}
\end{thebibliography}
\end{document}